%% file: main.tex
\documentclass[lettersize,journal]{IEEEtran}
\usepackage{amsmath,amsfonts}
\usepackage{algorithmic}
\usepackage{algorithm}
\usepackage{amsmath,amssymb,amsthm}
\usepackage{array}
\usepackage[caption=false,font=normalsize,labelfont=sf,textfont=sf]{subfig}
\usepackage{textcomp}
\usepackage{stfloats}
\usepackage{url}
\usepackage{verbatim}
\usepackage{graphicx}
\usepackage{cite}
\usepackage{soul}
\usepackage{xcolor}
\newtheorem{definition}{Definition}

\usepackage{amsthm}
\newtheorem{lemma}{Lemma}
\newtheorem{corollary}{Corollary}

\usepackage{booktabs}

\usepackage{hyperref}
\usepackage{siunitx}
\usepackage[acronym]{glossaries}
\input{acronyms}

\begin{document}


\title{Modality-Tailored Age of Information for Multimodal Data in Edge Computing Systems}

\author{
Ying Liu, Yifan Zhang,~\IEEEmembership{Graduate Student Member,~IEEE}, Xinyu Wang,~\IEEEmembership{Graduate Student Member,~IEEE},\\ Chao Yang,~\IEEEmembership{Graduate Student Member,~IEEE}, Kandaraj Piamrat,~\IEEEmembership{Member,~IEEE}, \\Stephan Sigg,~\IEEEmembership{Senior Member,~IEEE}, Zheng Chang,~\IEEEmembership{Senior Member,~IEEE}, Yusheng Ji,~\IEEEmembership{Fellow,~IEEE}

\thanks{Ying Liu, Yifan Zhang, and Stephan Sigg are with the Department of Information and Communications Engineering, Aalto University, 02150 Espoo, Finland (e-mail: ying.2.liu@aalto.fi, yifan.1.zhang@aalto.fi, stephan.sigg@aalto.fi)}
\thanks{Xinyu Wang is with the School of Computer Science and Technology, University of Science and Technology of China, Hefei 230027, China (e-mail: xinyuwang0306@mail.ustc.edu.cn).}
\thanks{Chao Yang is with the Department of Energy and Mechanical Engineering, Aalto University, 02150 Espoo, Finland (e-mail: chao.1.yang@aalto.fi).}
\thanks{Kandaraj Piamrat is with the Nantes University, École Centrale Nantes, CNRS, INRIA, LS2N, UMR 6004, 44000 Nantes, France (e-mail: kandaraj.piamrat@ls2n.fr).}
\thanks{Zheng Chang is with the Faculty of Information Technology, University of Jyv\"askyl\"a, 40014 Jyv\"askyl\"a, Finland (e-mail: zheng.chang@jyu.fi)}
\thanks{Yusheng Ji is with the Information Systems Architecture Science Research Division, National Institute of Informatics, Tokyo 101-0003, Japan (e-mail:kei@nii.ac.jp).}}

\markboth{Journal of \LaTeX\ Class Files,~Vol.~14, No.~8, August~2021}%
{Shell \MakeLowercase{\textit{et al.}}: A Sample Article Using IEEEtran.cls for IEEE Journals}


\maketitle

\begin{abstract}
As Internet of Things (IoT) systems scale and device heterogeneity grows, multimodal data have become ubiquitous.
Meanwhile, evaluating the freshness of multimodal data is essential, as stale updates would delay task execution, degrade decision accuracy, and undermine safety in latency-sensitive services.
However, existing freshness metrics such as Age of Information (AoI) are not suitable for multimodal data, as they do not capture modality-specific characteristics.
In this paper, we propose a metric, namely, Modality-Tailored Age of Information (MAoI), to provide a unified and decision-relevant evaluation of freshness for resource management and policy optimization for multimodal data.
This metric integrates modality-specific semantic and temporal characteristics, reflecting both age evolution and content importance for multimodal data in multi-access edge computing (MEC) systems.
Then, the closed-form expression of the average MAoI is derived, and an MAoI minimization problem is formulated, where sampling intervals and offloading decisions are optimized with practical energy constraints.
To effectively solve this problem, a Joint Sampling Offloading Optimization (JSO) algorithm is proposed to jointly optimize the sampling intervals and offloading decisions. It is a block coordinate descent-based algorithm where an optimal sampling-interval subalgorithm is used to update the sampling intervals, and an interference-aware best-response offloading subalgorithm is proposed to update the offloading decisions alternately.
Finally, a comprehensive simulation is performed, confirming that the MAoI metric effectively quantifies multimodal freshness compared to traditional AoI, and the JSO algorithm significantly minimizes the average MAoI compared to state-of-the-art algorithms.
\end{abstract}





\begin{IEEEkeywords}
Multimodal data, Age of Information (AoI), multi-access edge computing (MEC).
\end{IEEEkeywords}

\section{Introduction}\label{section1}
\IEEEPARstart{M}{ultimodal} perception has become a ubiquitous capability in Internet of Things (IoT) systems~\cite{He2022Collaborative, fu2023survey, wan2022ai}.
In these systems, heterogeneous data such as images, audio, radar signals, and other sensing modalities are jointly exploited to provide complementary, more reliable, and more diverse views~\cite{chen2024lightweight, liu2023mmfusion, ferranti2023survey}. This supports many practical applications, such as traffic monitoring~\cite{cheng2024tits}, pedestrian detection~\cite{tian2024rgbt}, and smart healthcare~\cite{alsamhi2022edge}. 
 These deployments collectively highlight the growing importance of heterogeneous multimodal systems in modern IoT~\cite{Mondal2025Survey}.
Meanwhile, in such systems, information freshness is particularly critical because outdated information can delay task execution~\cite{Shiraishi2023WCL}, reduce decision accuracy~\cite{mena2023automatica_aoi_ncs}, and weaken safety guarantees~\cite{zhang2020tits_cacc_delay} in latency-sensitive applications such as cooperative driving~\cite{wang2025latencycp}, industrial control~\cite{riihinen2024iiot_realtime}, and healthcare monitoring~\cite{habibzadeh2020hiot}.

Therefore, how to evaluate freshness in multimodal settings becomes critical. 
Most existing schemes use the Age of Information (AoI)~\cite{kaul2012real} or its task-aware variants~\cite{chiariotti2022qao, li2021iotj_aop, jayanth2022twc_aopi} as freshness metrics in similar systems. AoI measures the time elapsed since the most recently generated update was received, which ties freshness to system-wide update dynamics~\cite{kaul2012real}. 
Its variants change where and how age is measured to match different objectives, such as Query AoI~\cite{chiariotti2022qao}, the Age of Processing~\cite{li2021iotj_aop}, and the Age of Processed Information~\cite{jayanth2022twc_aopi}. 

With these metrics, prior work has built age-aware scheduling in MEC~\cite{he2024tmc}, tuned sampling, uplink, and offloading under device and link limits~\cite{jiang2023iotj}, traded energy for freshness on battery-powered devices~\cite{liu2022wcl}, synchronized digital twins and set triggers at the edge~\cite{Li2024DTContinual,li2024ton,zhang2024tsc}, and analyzed how shared queues in computing and transmission affect freshness~\cite{tcom2024_aoi_edgequeue}. 
However, the AoI and its variants are modality-agnostic. Consequently, these metrics do not account for the distinct sampling, queuing, and inference pipelines across multimodal data such as images~\cite{he2016resnet}, audio~\cite {Chan2021LAS}, and other signals~\cite{foumani2024dltsc, zhang2023mmwave}. They also cannot capture the different timeliness priorities of multimodal data. For example, real-time video streams~\cite{xu2023evacomst} can often tolerate lower freshness than safety-critical biosignals~\cite{charlton2023ppg}.
Hence, AoI and its variants are unable to capture the key characteristics of multimodal data.
As a result, using modality-agnostic age metrics in multimodal systems would lead to suboptimal update prioritization and inefficient resource allocation. 

Actually, designing a freshness metric for multimodal data is both urgent and technically challenging.
To reflect the temporal recency of the multimodal data, the metric requires increasing monotonically with staleness. 
The metric should also remain sensitive to task-relevant content changes to capture modality-specific semantics and temporal dynamics. 
Meanwhile, cross-modal comparability is required to weigh the freshness gain of one modality against another on a common scale, thereby optimizing system resource management. 
The metric also needs to be derivable from low-overhead features to support fast online control without incurring substantial time or resource costs. 
These requirements pose a challenging design problem for multimodal freshness metrics.

To address these challenges, we propose the Modality-Tailored Age of Information (MAoI) to evaluate the freshness of the multimodal data. 
MAoI captures multimodal heterogeneity by modeling modality-dependent sampling processes and by accounting for inference latency through representative neural pipelines aligned with each modality’s characteristics. 
In addition, MAoI maps modality-specific features to age growth to integrate the semantic and temporal features of multimodal data and to reflect the different priorities across different modalities. These modality-specific features are low-overhead, so the metric can be computed online and used for fast control without imposing significant time or resource costs.

Then, the MAoI is instantiated in a multi-access edge computing (MEC) system with numerous IoT devices and an MEC server co-located with a base station (BS).
Building on these, we minimize the average MAoI by optimizing the sampling intervals and offloading decisions under practical device energy constraints.
Specifically, a block coordinate descent (BCD) based algorithm named Joint Sampling Offloading Optimization (JSO) is developed, which decouples the mixed continuous–discrete problem into a sampling block and an offloading block.
In the sampling block, an optimal sampling-interval subalgorithm constructs a convex upper-bound surrogate of the MAoI objective to compute a closed-form per-device update, followed by a projected-Newton refinement to enforce bounds and improve accuracy. In the offloading block, an interference-aware best-response offloading subalgorithm models the offloading decisions as a distributed game under uplink interference, where each device updates its action via an interference-aware best response strategy given the current decisions of other devices. These two subalgorithms alternate within the BCD scheme to update their respective blocks, yielding an effective joint policy for sampling and offloading. Finally, comprehensive simulations are conducted to validate the effectiveness of the MAoI and the proposed optimization algorithm. 
The main contributions are summarized as follows
\begin{enumerate}
    \item MAoI is proposed for the freshness evaluation of the multimodal data, which integrates modality-specific semantic and temporal characteristics, reflecting both age evolution and content importance. Specifically, image MAoI grows with content dynamics and the region-of-interest ratio within the frame; audio MAoI grows with semantic variation and perceptual-quality thresholds; and signal MAoI grows with temporal dynamics and measurement quality.
    
    \item A MAoI minimization algorithm is developed, where sampling intervals and offloading decisions are jointly optimized under the constraints of device energy. Specifically, the JSO algorithm is used to solve the problem efficiently, in which an optimal sampling-interval subalgorithm is used to update the sampling intervals, and an interference-aware best-response offloading subalgorithm is proposed to update the offloading decisions.
    
    \item Extensive simulations are conducted to validate the effectiveness of MAoI on multimodal workloads, showing that MAoI better quantifies multimodal freshness than traditional AoI and that the proposed JSO algorithm consistently lowers the system-level average MAoI while meeting constraints compared to state-of-the-art algorithms.
\end{enumerate} 

The remainder of this paper is organized as follows. Section~\ref{sec:system_model} introduces the system model, covering modality-specific time model, the offloading model, and the energy consumption model. Section~\ref{sec:maoi_proposal} defines MAoI, proposes per-modality MAoI and derives the closed-form average MAoI expression, and formulates the average MAoI minimization problem. Section~\ref{sec:algorithm} presents the JSO algorithm. Section~\ref{sec:experimental} reports simulations that validate MAoI compared with AoI and evaluate the JSO algorithm. Section~\ref{sec:conclusion} concludes this work.

\begin{figure*}
    \centering
\includegraphics[width=0.9\linewidth]{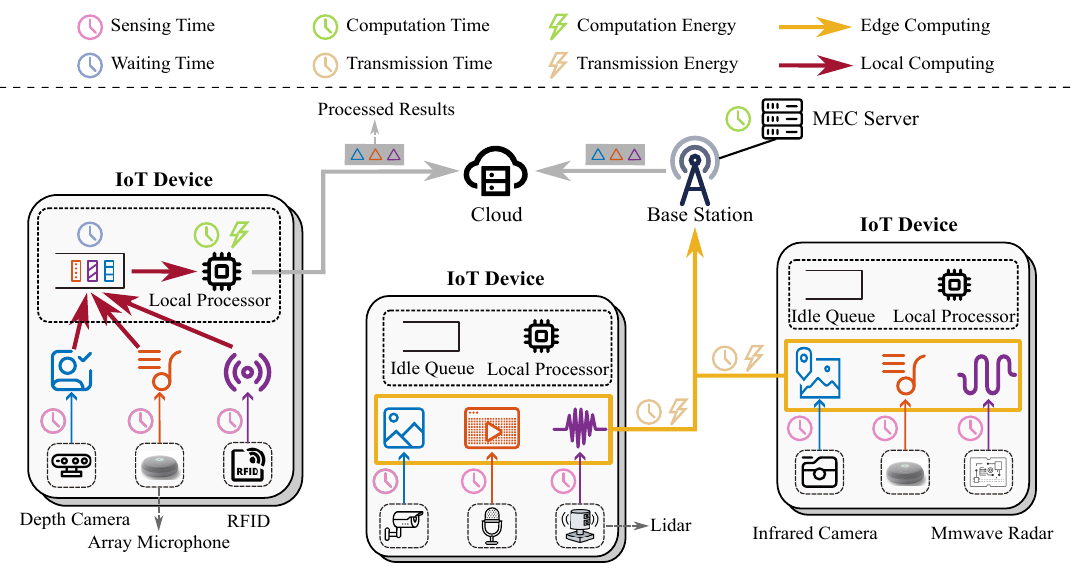}
    \caption{System Model. IoT devices with heterogeneous sensors generate multimodal updates, including image, audio, and other signal modalities. These updates are either processed locally or offloaded to a BS and executed at the MEC server. 
    Processed results from the local processor or the MEC server are forwarded to the cloud.
    For local computing, the system time consists of sensing time, waiting time due to limited computing resources, and local computation time, with energy consumption mainly from computation. If the computation updates are offloaded to the MEC server, the system time consists of sensing time, transmission time, and edge computation time (in parallel across modalities), with energy consumption mainly from transmission.}
    \label{fig:system_model}
\end{figure*}

\section{System Model}\label{sec:system_model}
As depicted in Fig.~\ref{fig:system_model}, a MEC system that consists of a set of IoT devices and a MEC server co-located with a BS is considered.
The IoT devices, represented by a set $\mathcal{D}$ with cardinality $D$, are equipped with a local processor and various sensors (e.g., cameras, audio sensors, temperature sensors, and radar), forming a set $\mathcal{S}$ with cardinality $S$.
These sensors continuously monitor the physical environment and generate multimodal status updates.
Conditioned on the limited computational capability of the local processor, each IoT device either processes its sensed data locally or offloads computation tasks over the device-to-BS uplink for execution at the MEC server. 


To maintain the freshness of status information and ensure effective system control, the entire sensing, offloading, and processing cycle is executed repeatedly. IoT devices follow the \textit{generate-at-will} sampling policy~\cite{yates2015lazy}, under which each device initiates a new status update only after completing the previous one, thereby preventing unnecessary waiting.
Each status update from an IoT device $d$, denoted by $U_{d} = \{ U_{d,s} \mid s = 1,2,3 \}$, consists of three representative modalities: image ($s=1$), audio ($s=2$), and other signals ($s=3$). After generation, the device determines whether to process these data locally or offload them to the MEC server.

\subsection{Modality-Specific Sensing and Computation Time Models}
In this section, modeling approaches for sensing and computation associated with image, audio, and other signal modalities are discussed. 
The sensing time is modeled based on the distinct data acquisition characteristics specific to each modality. Regarding computation time, given the increasing significance and extensive adoption of deep learning techniques in IoT data processing, the computation time model is modeled using the computational complexity of modality-specific deep learning algorithms, rather than relying solely on data size as in traditional methods.

\subsubsection{Image Modality}
\paragraph{Sensing Time Model} The sensing time required by image sensors, such as cameras, is typically negligible compared to transmission and computation times. Therefore, the image sensing time for device $d$ is modeled as
\begin{IEEEeqnarray}{c}
    T_{d,1}^{\text{sens}} = 0.
\end{IEEEeqnarray}

\paragraph{Computation Time Model} 
It is assumed that each device $d$ captures images of resolution $H_d\times W_d$ (height × width) in RGB format, consisting of 3 color channels with 8 bits per channel. 
Thus, the data size of image data generated by device $d$ is calculated as
\begin{equation}
S_{d}^{\text{img}} = H_d \times W_d \times 3 \times 8.
\end{equation}

An adaptive ResNet~\cite{he2016resnet} is adopted with residual connections to process the image data. Its classification head uses adaptive average pooling, so the network accepts variable-size inputs while producing a fixed-length representation. The network's computational cost is dominated by convolutions and scales approximately with spatial area. Let $O_{\text{ResNet}}^{\text{base}}$ denote the complexity of the same architecture for a $224\times224$ reference input. Accordingly, we model the complexity as a function of the input resolution $(H_d, W_d)$ and scale it from the baseline, which is
\begin{equation}
O_{\text{ResNet}}(W_d,H_d) \approx O_{\text{ResNet}}^{\text{base}} \times \frac{W_d\times H_d}{224\times224}.
\end{equation}


Consequently, the computation time for an image generated by device $d$ is calculated as
\begin{equation}
T_{d,1}^{c,\text{comp}} = \frac{O_{\text{ResNet}}(W_d,H_d)}{f_{c}}, \quad c \in \{\text{local},\text{edge}\},
\end{equation}
where \( f_{c} \) represents the computational capability at the computing location \( c \). Specifically, when \( c = \text{local} \), the computation is performed locally by the IoT device; when \( c = \text{edge} \), the computation is executed at the MEC server.

\subsubsection{Audio Modality}
\paragraph{Sensing Time Model} The sensing duration for audio data directly corresponds to the recorded audio segment duration $t_{d}^{\text{aud}}$. Thus, the audio sensing time is
\begin{equation}
T_{d,2}^{\text{sens}}=t_{d}^{\text{aud}}.
\end{equation}

\paragraph{Computation Time Model} Audio data from device $d$ is characterized by duration $t_{d}^{\text{aud}}$, sampling rate $R_{d}^{\text{aud}}$, bit depth $b_{d}^{\text{aud}}$, and number of audio channels $n_{d}^{\text{aud}}$. Hence, the audio data size is expressed as
\begin{equation}
S_{d}^{\text{aud}}=t_{d}^{\text{aud}}\times R_{d}^{\text{aud}}\times n_{d}^{\text{aud}}\times b_{d}^{\text{aud}}.
\end{equation}

For audio data processing, DeepSpeech2~\cite{amodei2016deep} is utilized, which is an advanced model composed of convolutional layers followed by recurrent neural network layers with bidirectional long short-term memory (BiLSTM). 
Due to the sequential processing nature of BiLSTM layers and the convolutional layers’ linear scaling property with respect to input length, the computational complexity of DeepSpeech2
is effectively proportional to the audio duration, which can be expressed as
\begin{equation}
O_{\text{DS2}}(t_{d}^{\text{aud}}) \approx O_{\text{DS2}}^{\text{base}} \times t_{d}^{\text{aud}},
\end{equation}
where $O_{\text{DS2}}^{\text{base}}$ is the baseline per-second computational cost of the chosen DeepSpeech2 inference model.
Thus,  the computation time for audio modality generated by device $d$ is
\begin{equation}
T_{d,2}^{c,\text{comp}} = \frac{O_{\text{DS2}}(t_{d}^{\text{aud}})}{f_{c}}, \quad c \in \{\text{local},\text{edge}\}.
\end{equation}

\subsubsection{Other Signal Modality}
\paragraph{Sensing Time Model}
Other signal data, such as environmental, physiological, or radar measurements, are continuously sampled and segmented into frames for downstream processing.
The sensing time for other signal data is defined as the total observation duration $t_d^{\text{sig}}$, which is given by
\begin{equation}
        T_{d, 3}^{\text{sens}} = t_d^{\text{sig}}.
\end{equation}
Given the frame rate $f_{d}^{\text{frame}}$ (frames per second), the total number of frames $L_d$ for other signal data is calculated as
\begin{equation}
        L_d = f_{d}^{\text{frame}} \times t_d^{\text{sig}}.
\end{equation}
For each frame, feature extraction or detection algorithms yield $N_{d}^{\text{sig}}$ detected points. Each point is described by $F_d^{sig}$ features (such as spatial coordinates, amplitude, or velocity). 
Thus, the amount of data to be transmitted or processed for the entire signal segment is given by
\begin{equation}
        S_{d}^{\text{sig}} = L_d \times N_{d}^{\text{sig}} \times F_d^{sig} \times w_d^{\text{sig}},
\end{equation}
where $w_d^{\text{sig}}$ denotes the number of bits used to represent each feature.

\paragraph{Computation Time Model}
To model complex temporal dependencies in other signal data, the Temporal Fusion Transformer (TFT)~\cite{lim2021temporal} is employed. The self-attention mechanism within TFT requires pairwise interactions across all frames of the input sequence, resulting in an overall computational complexity that scales quadratically with the number of frames, which can be expressed as
\begin{align}
    O_{\text{TFT}}(L_{ d}) &\approx O^{\text{base}}_{\text{TFT}} \cdot \frac{L_{ d}^2}{L_{\text{base}}^2},
\end{align}
where $O^{\text{base}}_{\text{TFT}}$ denotes the computational complexity for a reference sequence length $L_{\text{base}}$ of the TFT networks.
Accordingly, the computation time for other signal data is
\begin{equation}
    T^{c, \text{comp}}_{d, 3} = \frac{O_{\text{TFT}}(L_{ d})}{f_c}, \quad c \in \{\text{local}, \text{edge}\}.
\end{equation}

\subsection{Offloading Model}
In the proposed system, IoT devices can process their multimodal status updates either locally or offload them to the MEC server for processing.
A device-level binary offloading policy is adopted: for each device $d$, $x_d\!\in\!\{0,1\}$ indicates whether its modalities are offloaded ($x_d\!=\!1$) or processed locally ($x_d\!=\!0$).
Let $\mathbf{x}=\{x_{d}\mid d=1,2,\dots,D\}$ denote the offloading decisions of all devices, with
\begin{IEEEeqnarray}{c}
     x_{d} =
\begin{cases}
1,& \text{offloaded to the MEC server,}\\
0,& \text{processed locally.}
\end{cases}
\end{IEEEeqnarray}

\subsubsection{Local Computing}
Due to limited computational capabilities, each IoT device processes the different modalities (image, audio, other signal) sequentially, introducing waiting times dependent on the processing order. Thus, the system time for local computing for modality $s$ of device $d$ is
\begin{equation}
T_{d,s}^{\text{local,sys}}=T_{d,s}^{\text{sens}}+W_{d,s}+T_{d,s}^{\text{local,comp}},\; s\in\{1, 2, 3\},
\end{equation}
where $W_{d,s}$ denotes the waiting time. 
Specifically, the waiting time $W_{d,s}$ for modality $s$ at device $d$ is equal to the total local computation time of all modalities scheduled ahead of it. Thus, we have
\begin{equation}
W_{d,s} = \sum_{s' \in \mathcal{S}_d^{\text{before}}(s)} T_{d,s'}^{\text{local,comp}}, \; s\in\{\text{1, 2, 3}\},
\end{equation}
where $\mathcal{S}_d^{\text{before}}(s)$ denotes the set of modalities processed before modality $s$ at device $d$.

\subsubsection{Edge Computing}
When offloading is adopted, the multimodal status updates are first forwarded to the BS and then executed at the co-located MEC server.
As all modalities of a device are offloaded jointly, the total uplink transmission time of device $d$ is
\begin{equation}
T_{d}^{\text{trans}}(\mathbf{x})= \frac{S_d}{r_d(\mathbf{x})} =\frac{S_{d}^{\text{img}} + S_{d}^{\text{aud}} + S_{d}^{\text{sig}}}{r_{d}(\mathbf{x})},
\label{eq:transmission_rate}
\end{equation}
where $r_{d}(\mathbf{x})$ denotes the transmission rate given by
\begin{IEEEeqnarray}{c}
    r_{d}(\mathbf{x})= B\log_2\left(1+\frac{P_{d}g_{d}^{\text{edge}}}{\omega_0+ {\textstyle \sum_{j=1, j \neq d}^{D}}x_{j}P_{j}g_{j}^{\text{edge}} } \right),
    \label{eq:rate_definition}
\end{IEEEeqnarray} 
{where $B$ is the uplink bandwidth, $P_d$ and $P_j$ are the transmit powers of devices $d$ and $j$, $g_{d}^{\text{edge}}$ and $g_{j}^{\text{edge}}$ denote the device–to-BS channel gains, the term $\sum_{j\neq d} x_j P_j g_{j}^{\text{edge}}$ collects the co-channel uplink interference aggregated at the BS, and $\omega_0$ is the noise power.}

Given the richer computational resources at the MEC server, all modalities can be processed simultaneously, thereby eliminating any waiting time. Hence, the total system time for edge computing is given by
\begin{equation}
        T_{d,s}^{\text{edge,sys}}(\mathbf{x})\!= \!T_{{d,s}}^{\text{sens}} \!+ \!T_{d}^{\text{trans}}(\mathbf{x})\!+\! T_{d,s}^{\text{edge,comp}}, s\in\{{1, 2, 3}\}.
\end{equation}

Overall, the system time is calculated as
\begin{equation}
    T_{d,s}^{\text{sys}}(\mathbf{x})\!= \!x_{d}T_{d,s}^{\text{edge,sys}}(\mathbf{x}) \!+\!(1-x_{d}) T_{d,s}^{\text{local,sys}}, \; s\in\{{1, 2, 3}\}.
    \label{eq:system_time}
\end{equation} 

\subsection{Energy Consumption Model}
This subsection characterizes the energy consumption of each IoT device, consisting of sensing, computation, and transmission energy. 

\subsubsection{Sensing Energy}
The per-update sensing energy is composed of three parts, corresponding to the image, audio, and other signal modalities.

For the image data, a sensing energy model grounded in recent CMOS image sensor (CIS) studies~\cite{ma2023camj} is used, which attributes the per-capture cost mainly to pixel readout, ADC conversion, and on-sensor I/O. Accordingly, the sensing energy model is expressed as
\begin{equation}
E_{d}^{\text{img}}=\kappa_d^{\text{cam}}+e_d^{\text{pix}}\,(H_d\,W_d\, c^{\text{img}}),
\label{eq:Eimg}
\end{equation}
where $\kappa_d^{\text{cam}}$ denotes frame-level overhead energy (e.g., sensor wake-up and initial on-sensor preprocessing), $e_d^{\text{pix}}$ is the effective per-pixel energy, and $c^{\text{img}}$ denotes the number of channels for the image data.

For the audio modality, analog-to-digital converter (ADC) figures-of-merit and system measurements indicate that, under a fixed sampling configuration, the microphone analog front-end together with the ADC draws approximately constant power. Hence, the per-update energy grows linearly with recording duration~\cite{tang2022saradc, cao2023powerphone}, which is modeled as
\begin{equation}
E_{d}^{\text{aud}}=\big(P_{d}^{\text{aud}}+\alpha_{d}\,R_{d}^{\text{aud}}\,b^{\text{aud}}_{d}\,n^{\text{aud}}_{d}\big)\,t_{d}^{\text{aud}},
\label{eq:Eaud}
\end{equation}
where $P_{d}^{\text{aud}}$ is the configuration-dependent baseline power, $\alpha_d$ is a proportionality constant capturing ADC/I/O scaling.

For other signal modalities, millimeter-wave radar serves as a representative case. Vendor guides and datasheets show that, with a fixed chirp/frame configuration, the active sensing power is approximately constant. Consequently, the sensing energy over one update is proportional to the active-on time~\cite{tiAWR6843AOP}, which is given by
\begin{equation}
E_{d}^{\text{sig}}=P_{d}^{\text{sig}}\,t_{d}^{\text{sig}},
\label{eq:Esig}
\end{equation}
where $P_{d}^{\text{sig}}$ is the average active sensing power under the chosen radar configuration.

Accordingly, the per-update sensing energy for device $d$ is given by
\begin{equation}
E_{d}^{\text{sens}} = E_{d}^{\text{img}} + E_{d}^{\text{aud}} + E_{d}^{\text{sig}}.
\label{eq:Esens_total}
\end{equation}

\subsubsection{Computation Energy}
If device $d$ performs local computing ($x_{d}=0$), its computation energy is modeled as linearly proportional to the algorithmic complexity of the processing algorithms for each modality. Thus, we have
\begin{equation}
E_{d}^{\text{comp}}\!=\!\gamma\left[O_{\text{ResNet}}(W_d,H_d)\!+\!O_{\text{DS2}}(t_{d}^{\text{aud}})\!+\!O_{\text{TFT}}(L_d)\right],
\end{equation}
where $\gamma$ denotes the energy consumption coefficient.

\subsubsection{Transmission Energy}
If device $d$ offloads its status updates ($x_{d}=1$), instead of computational energy, transmission energy is required
\begin{equation}
E_{d}^{\text{trans}}(\mathbf{x})=P_{d}T_{d}^{\text{trans}}(\mathbf{x}),
\end{equation}
where $P_{d}$ denotes the transmission power of device $d$.

Overall, the total energy consumption of the device $d$ is
\begin{equation}
E_{d}(\mathbf{x})=E_{d}^{\text{sens}} + (1 - x_{d})E_{d}^{\text{comp}}+x_{d}E_{d}^{\text{trans}}(\mathbf{x}).
\end{equation}

\section{MAoI Proposal and Problem Formulation}\label{sec:maoi_proposal}
This section defines the MAoI for the image, audio, and signal modalities. The closed-form expressions for the average MAoI of each modality are then derived. Finally, a system-level optimization problem that minimizes the average MAoI is formulated.

\subsection{MAoI Proposal for the Image Modality}
We propose an MAoI metric specifically for image modality, explicitly incorporating two key attributes: the dynamism of image content and the proportion of important regions within the image.

\paragraph{Dynamism of Image Content}
Images exhibit varying levels of dynamism; static images (e.g., landscape photos) generally convey less time-critical information, whereas dynamic images (e.g., surveillance footage or video frames) often contain rapidly changing, critical information that necessitates timely updates. We quantify this image dynamism $I_{d}^{\text{img}}$ using the average absolute pixel-wise difference between consecutive images
\begin{IEEEeqnarray}{c}
I_{d}^{\text{img}} = \frac{1}{M} \sum_{m=1}^{M} \left| p_{d, i}^m - p_{d, i-1}^m \right|,
\end{IEEEeqnarray} 
where $p_{d, i}^m$ denotes the value of the $m$-th pixel in the $i$-th image, and $M$ is the total number of pixels.

\paragraph{Proportion of Key Regions}
Certain images contain regions of particular interest, such as faces or key objects, that carry richer and more critical information requiring prompt processing. These regions can be automatically identified using deep learning-based object detection techniques. We define $R_{d}^{\text{img}}$ as the proportion of these key regions relative to the total image area
\begin{IEEEeqnarray}{c}
R_{d}^{\text{img}} = \frac{A_d^{\text{roi}}}{A_d^{\text{total}}},
\end{IEEEeqnarray} 
where $A_{\text{roi}}$ is the area occupied by the regions of interest, and $A_{\text{total}}$ is the total image area.

\paragraph{Event-Triggered Growth and Image Average MAoI}
In the proposed system, each device generates status updates at a constant sampling interval $\tau_d$. 
We define {image events} as noticeable changes in the captured image content within a sampling interval $\tau_d$. 
Such events typically occur when the overall scene dynamics increase rapidly or when the proportion of key regions varies markedly, making the image more informative for decision-making. 
For analytical tractability, the occurrence of image events is modeled as a Poisson process with average generation rate $\lambda_1$. 

When an image event occurs within the interval $\tau_d$, the growth rate $K_{d,1}$ of the MAoI for the image modality is enlarged by incorporating the contributions from image dynamism and key region proportion; otherwise, it remains at the baseline level. Formally, 
\begin{IEEEeqnarray}{c}
K_{d,1} =
\begin{cases}
1, & \text{if no event occurs},\\
1 +I_{d}^{\text{img}} + R_{d}^{\text{img}},&\text{otherwise}.
\end{cases}
\end{IEEEeqnarray}
This formulation enables the average image MAoI to reflect both the frequency of informative content changes and their relative importance.
Thus, the probability mass function (PMF) of $K_{d,1}$ is
\begin{IEEEeqnarray}{c}
\mathbb{P}(K_{d,1}) =
\begin{cases}
1 - e^{-\lambda_1 \tau_d}, & K_{d,1} = 1 + I_{d}^{\text{img}} + R_{d}^{\text{img}},\\
e^{-\lambda_1 \tau_d}, & K_{d,1} = 1.
\end{cases}
\label{pmf:k1}
\end{IEEEeqnarray}

The MAoI for image data at time $t$ is defined as
\begin{IEEEeqnarray}{c}
\Delta_{d,1}(t) = k_{d,1}^{(i)} \big(t - a_d(t)\big), \quad t \in (t_i, t_{i+1}],
\end{IEEEeqnarray}
{where }$\{t_i\}_{i\ge 0}$ are the generation time of successive image updates at device $d$, $a_d(t)=t_i\ \text{on}(t_i,t_{i+1}]$,\ {and } $k_{d,1}^{(i)}$ is the realization of $K_{d,1}${over that interval.}

To calculate the average MAoI for the image modality, we consider an example of device $d$. Fig.~\ref{fig:AoI_evo} depicts the evolution of the MAoI for the image modality in the proposed multimodal system. Without loss of generality, we assume the initial observation time is $t_0 = 0$, and the initial MAoI is denoted as $\Delta_{d,1}(0)$.
\begin{figure}
    \centering
\includegraphics{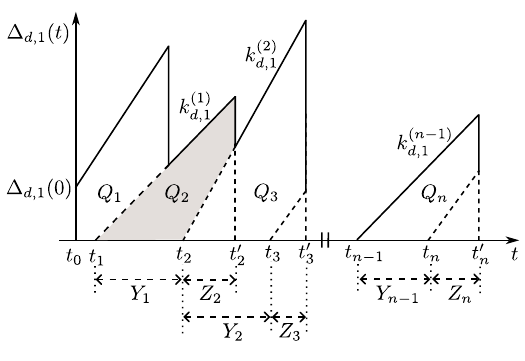}
    \caption{An example of the evolution of MAoI for the image modality.}
    \label{fig:AoI_evo}
\end{figure}
The MAoI for the $i$-th image update escalates at a rate defined by the slope $k_{d,1}^{(i)}$, continuing until the subsequent image update is sent to the cloud. When the system goes to a steady state, $\{k_{d,1}^{(i)}\}$ are independent and identically distributed (i.i.d.) realizations of $K_{d,1}$.

Let $Y_{i}$ denote the interval between the $i$-th and $(i+1)$-th image data generation. 
 \begin{IEEEeqnarray}{c}
     Y_i = t_{i+1} - t_{i}, 
 \end{IEEEeqnarray} 
where $t_i$ is the generation time of the $i$-th image updates. 
Denote $Z_{i}$ as the system time of the $i$-th image update
 \begin{IEEEeqnarray}{c}
     Z_{i} = t_{i}' - t_{i},
 \end{IEEEeqnarray} 
where $t_{i}'$ represents the service completion time of the $i$-th image update. Then, in a time period $[0,T]$, the average MAoI for the image modality of device $d$ is
\begin{IEEEeqnarray}{c}
\Delta _{d,1}(T)=\frac{1}{T}{\int_0^T {\Delta}_{d,1}(t) \, dt}.
\end{IEEEeqnarray} 
For the ease of exposition, we set $T=t_{n}'$. As shown in Fig.~\ref{fig:AoI_evo}, the area under the MAoI curve can be calculated as a concatenation of polygonal areas $Q_i$ and the triangular area within the interval $\left(t_{n}, t_{n}'\right)$. Therefore, the average MAoI for the image modality is expressed as
\begin{IEEEeqnarray}{c}
\Delta_{d,1}(T) = \frac{1}{T} \left(\sum_{i=1}^{n} Q_{i}+\frac{1}{2}\left({t_{n}'-t_{n}}\right)^{2}\right).
\end{IEEEeqnarray} 
Let $ C=\frac{1}{2}{(t_{n}'-t_n)^2 + Q_1} $, we can obtain
\begin{IEEEeqnarray}{c}
\Delta_{d,1}(T) =\frac{C}{T}+\frac{n-1}{T} \frac{1}{n-1}\sum _{i=2}^n Q_i.
\label{5}
\end{IEEEeqnarray} 
When $T$ goes to infinity, the term ${C}/{T}$ converges to zero as $C$ is finite and ${n-1}/{T}$ can be regarded as the value of ${1}/{\mathbb{E}{(Y_i)}}$. 

Letting {$T$} go to infinity, the average MAoI for the image modality can be expressed as
\begin{IEEEeqnarray}{c}
\begin{aligned}
{\bar\Delta}_{d,1} &= \lim_{T \to \infty}\Delta_{d,1}(T) = \frac{1}{\mathbb{E}\left(Y_i\right)}\left(\frac{1}{n-1}\sum _{i=2}^n Q_i\right)\\&=\frac{\mathbb{E}\left(Q_i\right)}{\mathbb{
E}\left(Y_i\right)},
\end{aligned}
\end{IEEEeqnarray} 
where $\mathbb{E}\left(\cdot\right) $ is the expectation. 
As $Q_i$ can be computed as
\begin{IEEEeqnarray}{c}
\begin{aligned}
Q_i &= \frac{1}{2}k_{d,1}^{(i-1)}(Y_{i-1} + Z_i)^2 - \frac{1}{2}k_{d,1}^{(i)} Z_i^2 \\
    &= \frac{1}{2}k_{d,1}^{(i-1)}Y_{i-1}^2 + k_{d,1}^{(i-1)}Y_{i-1}Z_i + \frac{1}{2}\big(k_{d,1}^{(i-1)} - k_{d,1}^{(i)}\big)Z_i^2.
\end{aligned}
\end{IEEEeqnarray} 
The average MAoI for the image modality is represented as
\begin{equation}
    {\bar\Delta}_{d,1}\! =\!\frac{\mathbb{E}[\frac{1}{2}k_{d,1}^{(i-1)}Y_{i\!-\!1}^2 \!+ \!k_{d,1}^{(i-1)}Y_{i-1}Z_{i} \!+\! \frac{1}{2}(k_{d,1}^{(i-1)} \!- \!k_{d,1}^{(i)})Z_{i}^2]}{\mathbb{E}(Y_{i-1})}.
\label{4}
\end{equation}

As we mentioned above, the devices generate status updates at a constant sampling interval $\tau_d$. Therefore, the inter-arrival time $Y_{i-1}$ is considered deterministic and satisfies
\begin{IEEEeqnarray}{c}
    \mathbb{E}(Y_{i-1}) = \tau_d.
\end{IEEEeqnarray}
From the PMF of $K_{d,1}$ in \eqref{pmf:k1}, we obtain
\begin{IEEEeqnarray}{c}
\mathbb{E}\!\left(K_{d,1}\right) = \mathbb{E}\!\left(k_{d,1}^{(i)}\right) = 1\! + \!\big(I_{d}^{\text{img}} \!+ \!R_{d}^{\text{img}}\big)\big(1\! -\! e^{-\lambda_1 \tau_d}\big).\IEEEeqnarraynumspace
\end{IEEEeqnarray}

Then, the term $\mathbb{E}\!\left[k_{d,1}^{(i-1)} Y_{i-1}^2\right]$ and $\mathbb{E}\!\left[k_{d,1}^{(i-1)} Y_{i-1} Z_i\right]$ can be calculated as
\begin{IEEEeqnarray}{c}
    \mathbb{E}\!\left[k_{d,1}^{(i-1)} Y_{i-1}^2\right] \!=\!  \left[1 + (I_{d}^{\text{img}} + R_{d}^{\text{img}})(1 - e^{-\lambda_1 \tau_d}) \right]\tau_d^2,\IEEEeqnarraynumspace
\end{IEEEeqnarray} 
\begin{IEEEeqnarray}{c}
    \mathbb{E}\!\left[k_{d,1}^{(i-1)} Y_{i-1} Z_i\right] \!= \! \left[1 \!+ \!(I_{d}^{\text{img}} \!+\! R_{d}^{\text{img}})(1\! - \!e^{-\lambda_1 \tau_d}) \right] \!\tau_d  Z_{i},\IEEEeqnarraynumspace
\end{IEEEeqnarray} 
where $Z_i$ denotes the system time of the $(i+1)$-th image data, computed from~\eqref{eq:system_time} as
\begin{equation}
    Z_i = T_{d,1}^{\text{sys}}(\mathbf{x}).
\end{equation}

Since $k_{d,1}^{(i-1)}$ and $k_{d,1}^{(i)}$ are i.i.d. random variables, the following formula can be calculated as
\begin{IEEEeqnarray}{c}
    \mathbb{E}\!\left[\tfrac{1}{2}\big(k_{d,1}^{(i-1)} - k_{d,1}^{(i)}\big) Z_i^2\right] = 0,
\end{IEEEeqnarray}

Therefore, the average MAoI for the image modality is given by
\begin{IEEEeqnarray}{c}
\begin{aligned}
{\bar\Delta}_{d,1} (\tau, \mathbf{x}) =& \left[1 + (I_{d}^{\text{img}} + R_{d}^{\text{img}})(1 - e^{-\lambda_1 \tau_d})\right] \\
&* \left[\frac{1}{2} \tau_d +  T_{d,1}^{\text{sys}}(\mathbf{x})\right].
\label{eq:total_aoi}
\end{aligned}
\end{IEEEeqnarray} 



\subsection{MAoI for the Audio Modality}
We design the MAoI for the audio modality based on two key characteristics: semantic changes and audio quality. 

\paragraph{Semantic Changes in Audio Content}
The timeliness of audio largely depends on its semantic content. 
The measure of semantic changes in audio, $V_{d}^{\text{sem}}$, is computed by averaging the semantic variations across consecutive audio frames
\begin{IEEEeqnarray}{c}
V_{d}^{\text{sem}} = \frac{1}{N_{\text{frame}}-1}\sum_{i=2}^{N_{\text{frame}
}}\frac{1}{M_{\text{aud}}}\sum_{k=1}^{M_{\text{aud}}}\left| f_{d,i}^{k}-f_{d,i-1}^{k}\right|,
\end{IEEEeqnarray} 
where $f_{d,i}^{k}$ denotes the $k$-th feature of the $i$-th frame, such as Mel-spectrum coefficients or Mel-frequency cepstral coefficients (MFCCs), $M_{\text{aud}}$ is the total number of feature dimensions, and $N_{\text{frame}}$ is the total number of frames.
\paragraph{Audio Quality}
The audio quality, denoted by $Q_d^{\text{aud}}$, is quantified by the product of the audio's sampling rate and bit depth
\begin{IEEEeqnarray}{c}
    Q_d^{\text{aud}} = H_d^{\text{aud}} \times U_d^{\text{aud}},
\end{IEEEeqnarray} 
where $H_d^{\text{aud}}$ represents the sampling rate, and $U_d^{\text{aud}}$ indicates the bit depth.

\paragraph{Event-Triggered Growth and Audio Average MAoI}
The MAoI growth rate $K_{d,2}$ for the audio modality is defined analogously, using a Poisson process with parameter $\lambda_2$ to represent events occurring within each sampling interval $\tau_d$. $K_{d,2}$ takes the higher value $1+V_{d}^{\text{sem}} + Q_d^{\text{aud}}$ when at least one audio event occurs within $\tau_d$, otherwise $K_{d,2} = 1$.
The MAoI for the audio modality is succinctly defined as
\begin{IEEEeqnarray}{c}
\Delta_{d,2}(t)=k_{d,2}^{(i)}\big(t-a_d(t)\big),\quad t\in(t_i,t_{i+1}],
\end{IEEEeqnarray}
{where }$k_{d,2}^{(i)}$ is the realization of $K_{d,2}$.

Since the derivation of the average MAoI for the audio modality follows a procedure analogous to that of the image modality, we directly present its resulting expression to avoid redundancy
\begin{IEEEeqnarray}{c}
\begin{aligned}
    \bar{\Delta}_{d,2}(\tau, \mathbf{x}) = & \left[1 + (V_{d}^{\text{sem}} + Q_d^{\text{aud}})(1 - e^{-\lambda_2 \tau_d})\right] \\
    &*\left[\frac{1}{2} \tau_d +  T_{d,2}^{\text{sys}}(\mathbf{x})\right],
\end{aligned}
\label{eq:total_aoi_audio}
\end{IEEEeqnarray}
where $T_{d,2}^{\text{sys}}(\mathbf{x})$ denotes the system time of the audio modality.

\subsection{MAoI for Other Signal Modalities}
Time-series measurements (e.g., environmental, physiological, or RF/radar) are continuously sampled and batched into frames for downstream processing. 
The MAoI for the other signal modalities captures two aspects: signal dynamics and Signal quality.

\paragraph{Signal Dynamics} 
Signal dynamics characterize how rapidly the content changes across frames and determine how quickly an observation becomes obsolete. 
Consider a segment with $L_d$ frames indexed by $i=1,\ldots, L_d$. In frame $i$ there are $N_{d,i}^{\text{sig}}$ detected points and each point provides $F_d^{\text{sig}}$ features. We form a frame-level feature vector by averaging the features over all detected points in that frame,
\begin{equation}
z_{d,i}^{(m)}=\frac{1}{N_{d,i}^{\text{sig}}}\sum_{n=1}^{N_{d,i}^{\text{sig}}} f_{d,i,n}^{(m)},\quad m=1,\ldots,F_d^{\text{sig}}.
\end{equation}
We then measure temporal variability by the mean squared difference between consecutive frame descriptors,
\begin{equation}
D_d^{\text{sig}}=\frac{1}{L_d-1}\sum_{i=2}^{L_d}\frac{1}{F_d^{\text{sig}}}\sum_{m=1}^{F_d^{\text{sig}}}\big(z_{d,i}^{(m)}-z_{d,i-1}^{(m)}\big)^2.
\end{equation}
This metric increases when the signal exhibits rapid transitions, while it remains small for slowly varying segments. 

\paragraph{Signal Quality}
Acquisition fidelity in time and amplitude defines the quality of the signal. We quantify signal quality as the product of the sampling rate and the quantization bit depth, which is
\begin{equation}
Q^{\text{sig}}_d = R^{\text{sig}}_d \cdot b^{\text{sig}}_d,
\end{equation}
where \(R^{\text{sig}}_d\) is the sampling rate and \(b^{\text{sig}}_d\) is the bit depth.

\paragraph{Event-Triggered Growth and Signal Average MAoI}
Consistent with the image and audio modalities, we model {signal events} within each sampling interval $\tau_d$ by a Poisson process with average generation rate $\lambda_3$. Let $K_{d,3}$ denote the MAoI growth rate for the other signal modalities of device $d$. It takes a higher value ($K_{d,3} = 1 + D^{\text{sig}}_d + Q^{\text{sig}}_d$) if at least one signal event occurs within $\tau_d$, and $1$ otherwise.
The MAoI for the other signal modalities is then
\begin{equation}
\Delta_{d,3}(t)=k_{d,3}^{(i)}\big(t-a_d(t)\big),\quad t\in\!\big(t_i,t_{i+1}\big],
\end{equation}
{where }$k_{d,3}^{(i)}$ denotes the growth rate for the $i$-th signal update and is the realization of $K_{d,3}$.
Following the same derivation used for images, the long-term average MAoI for the other signal modalities becomes
\begin{IEEEeqnarray}{c}
\begin{aligned}
    \bar{\Delta}_{d,3}(\tau, \mathbf{x}) = & \left[ 1 + (D^{\text{sig}}_d + Q^{\text{sig}}_d)(1 - e^{-\lambda_3 \tau_d}) \right] \\
     &*\left[ \frac{1}{2} \tau_d + T_{d,3}^{\text{sys}}(\mathbf{x})\right],
\end{aligned}
\label{eq:total_aoi_audio}
\end{IEEEeqnarray}
where $T_{d,3}^{\text{sys}}(\mathbf{x})$ is the system time for the other signal modality defined in~\eqref{eq:system_time}.

\subsection{Overall Average MAoI per Device}
The overall average MAoI for device $d$ integrates contributions from image, audio, and other signal modalities, which is expressed as in~(\ref{eq:total_average_taoi}), in which a weight $\Psi_{d,s}$ for device $d$ and modality $s$ is introduced to aggregate modality-specific attributes.
\begin{figure*}[!t]
\normalsize
\setcounter{equation}{56} 
\begin{IEEEeqnarray}{rCl}
        {\bar\Delta}_{d} (\tau_d, \mathbf{x}) &= &  {\bar\Delta}_{d,1} (\tau_d, \mathbf{x}) +  {\bar\Delta}_{d,2} (\tau_d, \mathbf{x}) +  {\bar\Delta}_{d,3} (\tau_d, \mathbf{x}) \nonumber  \\
        &=&  \left[1 + (I_{d}^{\text{img}} + R_{d}^{\text{img}})(1 - e^{-\lambda_1 \tau_d})\right]
* \left[\frac{1}{2} \tau_d +  T_{d,1}^{\text{sys}} (\mathbf{x})\right]\nonumber 
+  \left[1 + (Q_d^{\text{aud}} + V_{d}^{\text{sem}})(1 - e^{-\lambda_2 \tau_d})\right]
* \left[\frac{1}{2} \tau_d +  T_{d,2}^{\text{sys}} (\mathbf{x})\right] \\
&+&  \left[1 + (D^{\text{sig}}_d + Q^{\text{sig}}_d)(1 - e^{-\lambda_3 \tau_d})\right]
* \left[\frac{1}{2} \tau_d +  T_{d,3}^{\text{sys}} (\mathbf{x})\right] \nonumber \\
&=& \sum_{s=1}^{3}  \left[1 + \Psi_s(1 - e^{-\lambda_s \tau_d})\right]\left[\frac{1}{2}\tau_d + T_{d,s}^{\text{sys}}(\mathbf{x})\right],\Psi_{d,1}=I_{d}^{\text{img}}+R_{d}^{\text{img}},
\Psi_{d,2}=Q_d^{\text{aud}}+V_{d}^{\text{sem}},
\Psi_{d,3}=D^{\text{sig}}_d + Q^{\text{sig}}_d.\label{eq:total_average_taoi}
\end{IEEEeqnarray}
\hrulefill
\end{figure*}
\subsection{Problem Formulation}
The primary objective is to minimize the system average MAoI across all IoT devices, while ensuring that the average energy consumption of each device remains within a specified budget.
Specifically, the average energy consumption per unit time for IoT device $d$ is
\begin{IEEEeqnarray}{c}
\begin{aligned}
\bar{E}_d (\tau_d, \mathbf{x}) &= \frac{E_d(\mathbf{x})}{\mathbb{E}(Y_i)} = \frac{{E_d(\mathbf{x})}}{\tau_d}.
\label{eq:energy_total}
\end{aligned}
\end{IEEEeqnarray}
Accordingly, the optimization problem is formulated as
\begin{subequations} \label{eq:problem_constraint}
\begin{align}
\mathcal{(P)} : \: & \min_{(\tau, \mathbf{x})}  \sum_{d=1}^{D}{\bar{\Delta}_d (\tau_d, \mathbf{x})}  \label{eq:objective_new} \\
\text{s.t.} \: 
& x_d \in \{0,1\}, \quad \forall d \in \mathcal{D}\label{eq:binary_constraint_new} \\
& \tau_d \ge \tau_{\min}, \quad \forall d \in \mathcal{D} \label{eq:tau_constraint_new} \\
& \sum_{d=1}^{D} x_d (S_{d}^{\text{img}} + S_{d}^{\text{aud}} + S_{d}^{\text{sig}})\le S_{\text{th}}, \label{eq:capacity_constraint_new} \\
& \bar{E}_d (\tau_d, \mathbf{x}) \leq E_{d}^{\max}, \quad \forall d \in \mathcal{D} \label{eq:energy_constraint}
\end{align}
\end{subequations}
where $E_{d}^{\max}$ is the average energy budget for device $d$. Constraint~\eqref{eq:binary_constraint_new} ensures binary offloading decisions, constraint~\eqref{eq:tau_constraint_new} imposes the minimum feasible sampling interval, constraint~\eqref{eq:capacity_constraint_new} enforces the data capacity limit of the MEC server, and constraint~\eqref{eq:energy_constraint} guarantees that the average energy consumption of each device does not exceed its energy budget.

Problem $\mathcal{(P)}$ is challenging because the energy constraint couples the sampling intervals $\tau$ and the offloading decisions $\mathbf{x}$, which complicates the joint optimization. 
Accordingly, we relax~\eqref{eq:energy_constraint} using Lagrangian multipliers, which incorporates energy consumption into the objective, and obtain the following relaxed formulation 
\begin{align}
\mathcal{(P')}: \; &\min_{(\tau, \mathbf{x})}  \sum_{d=1}^{D}\left[{\bar{\Delta}_d (\tau_d, \mathbf{x}) + \mu_d \left( \bar{E}_d (\tau_d, \mathbf{x}) - E_{d}^{\max} \right)}\right],\\
 \text{s. t.} \; & \eqref{eq:binary_constraint_new},~\eqref{eq:tau_constraint_new},~\eqref{eq:capacity_constraint_new} \nonumber.
\label{eq:lagrange_obj}
\end{align}
where $\mu_d \geq 0$ is the Lagrange multiplier associated with the energy constraint of device $d$.

\section{The proposed MAoI Minimization Algorithm}\label{sec:algorithm}
Under per-device energy budgets, the MAoI minimization is approached by relaxing the average-energy constraints via Lagrangian dualization, yielding a penalized objective for a relaxed problem. 
Building on this, the mixed continuous–discrete problem is decomposed into a sampling-interval subproblem and an offloading subproblem.
We then propose the JSO algorithm that alternates an optimal sampling-interval subalgorithm with an interference-aware best-response offloading subalgorithm, while updating the Lagrange multipliers to enforce the energy budgets. These three updates repeat until convergence, yielding a practical joint solution for sampling and offloading.

\subsection{Problem Decomposition}
Given the coupling of continuous and discrete variables, we decompose the original joint optimization problem into two sequential subproblems.

\subsubsection{Sampling Interval Optimization}
For given computation offloading decisions $\mathbf{x}$, the sampling interval optimization subproblem for all devices is formulated as
\begin{align}
(\mathcal{P}_1'): \; &\min_{\tau}  \sum_{d=1}^{D} \left[ \bar{\Delta}_d (\tau_d, \mathbf{x}) + \mu_d \left( \bar{E}_d (\tau_d, \mathbf{x}) - E_{d}^{\max} \right) \right], \\ \text{s.t.} \; & \eqref{eq:binary_constraint_new}\nonumber.
\label{eq:subproblem1_relaxed}
\end{align}

\subsubsection{Computation Offloading Optimization}
With the sampling intervals determined by solving $(\mathcal{P}_1')$, the computation offloading optimization subproblem is formulated as
\begin{align}
(\mathcal{P}_2'): \; &\min_{\mathbf{x}}  \sum_{d=1}^{D} \left[ \bar{\Delta}_d (\tau_d, \mathbf{x}) + \mu_d \left( \bar{E}_d (\tau_d, \mathbf{x}) - E_{d}^{\max} \right) \right], \\
\text{s.t.} \; &~\eqref{eq:tau_constraint_new},~\eqref{eq:capacity_constraint_new}.\nonumber
\label{eq:subproblem2_relaxed}
\end{align}

\subsection{Optimal Sampling-Interval Subalgorithm}
Given fixed offloading decisions and Lagrange multipliers, the sampling interval optimization subproblem $(\mathcal{P}_1')$ can be decomposed into individual device-level subproblems. Specifically, for device $d$, the objective function is
\begin{align}
\bar{C}_d(\tau_d) =
& \sum_{s=1}^{3} \left[1 + \Psi_s (1 - e^{-\lambda_s \tau_d})\right]\left[\frac{1}{2}\tau_d + T_{d,s}^{\text{sys}}(\mathbf{x})\right] \nonumber \\
&~+ \mu_d\left(\frac{E_d(\mathbf{x})}{\tau_d} - E_d^{\max}\right).
\end{align}


To analyze the optimality conditions, we compute the first and second derivatives of the objective function with respect to $\tau_d$
\begin{align}
\frac{\partial \bar{C}_d(\tau_d)}{\partial \tau_d} =\,&
\sum_{s=1}^{3} \Psi_s \lambda_s e^{-\lambda_s \tau_d} T_{d,s}^{\text{sys}}(\mathbf{x}) 
+ \frac{\tau_d}{2} \sum_{s=1}^{3} \Psi_s \lambda_s e^{-\lambda_s \tau_d} \nonumber \\
&+ \frac{1}{2}\sum_{s=1}^{3}\left[1+\Psi_s(1-e^{-\lambda_s\tau_d})\right]
- \mu_d \frac{E_d(\mathbf{x})}{\tau_d^2},
\end{align}
\begin{equation}
\frac{\partial^2 \bar{C}_d(\tau_d)}{\partial \tau_d^2}\! = \!\mu_d \frac{2E_d(\mathbf{x})}{\tau_d^3}
+ \sum_{s=1}^{3} \Psi_s \lambda_s e^{-\lambda_s \tau_d}\!\left[1\! -\! \frac{\lambda_s \tau_d}{2}\! - \! \lambda_s T_{d,s}^{\text{sys}}(\mathbf{x})\right].
\end{equation}
The convexity of the objective function is determined by the sign of its second derivative. Given that $({2E_d}/{\tau_d^3})>0$ and $\sum_{s=1}^{3}\Psi_s\lambda_se^{-\lambda_s\tau_d}>0$, strict convexity occurs when
\begin{equation}
1-\frac{\lambda_s\tau_d}{2}-\lambda_sT_{d,s}^{\text{sys}}(\mathbf{x}) \ge 0, \quad \forall s\in\{1,2,3\},
\end{equation}
which implies
\begin{equation}
\tau_d\le\tau_d^{\text{th}}, \;\text{where}\;\tau_d^{\text{th}}=\min_{s\in{1,2,3}}\left\{\frac{2(1-\lambda_sT_{d,s}^{\text{sys}}(\mathbf{x}))}{\lambda_s}\right\}.
\label{eq:threshold_interval}
\end{equation}

Let $\tau_d^{\text{upper}} = \max\{\tau_{\text{min}}, \tau_d^{\text{th}}\}$. When $\tau_d> \tau_d^{\text{upper}}$, the convexity of the original objective function $\bar{C}_d(\tau_d)$ cannot be guaranteed. Thus, we introduce a strictly convex upper bound function $G_d(\tau_d)$, which is
\begin{align}
G_d(\tau_d)=&\sum_{s=1}^{3}  \left[1 + \Psi_s(1 - e^{-\lambda_s \tau_d^{\text{upper}}})\right] \nonumber \\
    &*\left[\frac{1}{2}\tau_d + T_{d,s}^{\text{sys}}(\mathbf{x})\right] + \mu_d\left(\frac{E_d}(\mathbf{x}){\tau_d} - E_d^{\max}\right).
    \label{eq:upper_bound_function}
\end{align}

The inequality $G_d(\tau_d)\ge \bar{C}_d(\tau_d)$ holds because the exponential terms in $G_d(\tau_d)$ are evaluated at the smaller fixed interval $\tau_d^{\text{up}}$, resulting in larger exponential values due to the monotonicity of the exponential function.

The strict convexity of $G_d(\tau_d)$ is confirmed by its second derivative
\begin{equation}
G_d''(\tau_d)= \frac{2\mu_d E_d(\mathbf{x})}{\tau^3} >0 .
\end{equation}
Subsequently, the closed-form solution minimizing $G_d(\tau_d)$ can be obtained by setting its first derivative to zero, i.e., $G_d'(\tau_d)=0$. The optimal sampling interval for $G_d(\tau_d)$ is thus given by
\begin{equation}
\tau_d^{\text{sub}}=\sqrt{\frac{2\mu_d E_d(\mathbf{x})}{\sum_{s=1}^{3}\left[1+\Psi_s\left(1-e^{-\lambda_s\tau_d^{\text{upper}}}\right)\right]}}.
\label{eq:optimal_convex_solution}
\end{equation}
Then, we obtain a feasible approximation
\begin{equation}
\tau_d^{\text{approx}}=\max\left\{\tau_d^{\text{th}},\tau_{\text{min}},\tau_d^{\text{sub}}\right\}.
\label{eq:final_optimal_interval}
\end{equation}
If $\tau_{\text{min}} < \tau_d^{\text{th}}$, we additionally compute the optimal solution within the convex region $\tau_d \in [\tau_{\text{min}}, \tau_d^{\text{th}}]$. In this region, the objective function $\bar{C}_d(\tau_d)$ is strictly convex. Hence, a unique global optimum can be determined by numerically solving the first-order optimality condition
\begin{equation}
\frac{\partial\bar{C}_{d}(\tau_d)}{\partial\tau_d}=0.
\end{equation}
Given the complexity of directly obtaining an analytical solution, we apply Newton's iterative method as follows
\begin{equation}
\tau_d^{(n+1)} = \tau_d^{(n)} - \frac{\frac{\partial \bar{C}_d}{\partial \tau_d}(\tau_d^{(n)})}{\frac{\partial^2 \bar{C}_d}{\partial \tau_d^2}(\tau_d^{(n)})},
\end{equation}
where at each iteration, the updated value $\tau_d^{(n+1)}$ is projected onto the feasible interval $[\tau_{\text{min}}, \tau_d^{\text{th}}]$, i.e.,
\begin{equation}
\tau_d^{(n+1)} \leftarrow \min\left\{\max\left\{\tau_d^{(n+1)},\ \tau_{\text{min}}\right\},\ \tau_d^{\text{th}}\right\}.
\label{eq:newton_method}
\end{equation}
This projection ensures that all intermediate and final solutions produced by Newton’s method remain within the feasible region. The iteration proceeds until convergence to a stable solution $\tau_d^{\text{newton}}$ within $[\tau_{\text{min}}, \tau_d^{\text{th}}]$.


Finally, the optimal sampling interval $\tau_d^*$ is selected as
\begin{equation}
\tau_d^*=\arg\min_{{\tau_d^{\text{newton}},\tau_d^{\text{approx}}}}\bar{C}_d(\tau_d),
\label{eq:chose_tau_star}
\end{equation}
where $\tau_d^{\text{newton}}$ is considered only if the convex region exists, i.e., $\tau_{\text{min}} < \tau_d^{\text{th}}$. The detailed procedure is summarized in Algorithm~\ref{alg:sampling_interval}.

\begin{algorithm}[t]
\caption{Optimal Sampling-Interval Subalgorithm}
\label{alg:sampling_interval}
\begin{algorithmic}[1]
\REQUIRE System parameters $\mathbf{x}$, $\mu_d$ $\Psi_s$, $\lambda_s$, $T_{d,s}^{\text{sys}}(\mathbf{x})$, $E_d$, $E_d^{\max}$, and $\tau_{\text{min}}$ for each device $d \in \mathcal{D}$
\ENSURE Optimal sampling intervals $\{\tau_d^*\}_{d\in\mathcal{D}}$
\FOR{each device $d \in \mathcal{D}$}
    \STATE Compute threshold $\tau_d^{\text{th}}$ using \eqref{eq:threshold_interval}
    \STATE Set $\tau_d^{\text{upper}} = \max\{\tau_{\text{min}}, \tau_d^{\text{th}}\}$
    \STATE Compute suboptimal solution $\tau_d^{\text{sub}}$ using \eqref{eq:optimal_convex_solution}
    \STATE Compute feasible approximation $\tau_d^{\text{approx}}$ using~\eqref{eq:final_optimal_interval}
    \IF{$\tau_{\text{min}} < \tau_d^{\text{th}}$}
        \STATE Initialize $\tau_d^{(0)} \in [\tau_{\text{min}}, \tau_d^{\text{th}}]$
        \REPEAT
            \STATE Update $\tau_d^{(n+1)}$ via Newton's method~\eqref{eq:newton_method}
        \UNTIL{convergence to $\tau_d^{\text{newton}}$}
        \STATE Evaluate $\bar{C}_d(\tau_d^{\text{newton}})$ and $\bar{C}_d(\tau_d^{\text{approx}})$
        \STATE Select $\tau_d^*$ using~\eqref{eq:chose_tau_star}
    \ELSE
        \STATE Set $\tau_d^* = \tau_d^{\text{approx}}$
    \ENDIF
\ENDFOR
\RETURN $\{\tau_d^*\}_{d\in\mathcal{D}}$
\end{algorithmic}
\end{algorithm}

\subsection{Interference-Aware Best-Response Offloading Subalgorithm}
In this subsection, the computation offloading optimization subproblem $(\mathcal{P}_2')$ is addressed, aiming to minimize the overall system cost given fixed sampling intervals and Lagrange multipliers. Formally, the simplified optimization problem is expressed as
\begin{align}
(\mathcal{P}_2'): &\min{\mathbf{x}} \sum_{d=1}^{D}{\bar{C}_d (\mathbf{x})},  \\
\text{s.t.}\; &x_d \in \{0,1\}, \quad \forall d \in \mathcal{D}, \nonumber\ \\
&\sum_{d=1}^{D} x_d (S_{d}^{\text{img}} + S_{d}^{\text{aud}} + S_{d}^{\text{txt}})\le S_{\text{th}},\nonumber
\end{align}
where the system cost $\bar{C}_d(\mathbf{x})$ is simplified by defining an auxiliary parameter $\phi_s$, which is
\begin{equation}
\phi_s = \left[1 + \Psi_s\left(1 - e^{-\lambda_s \tau_d}\right)\right].
\end{equation}
Consequently, the system cost function becomes
\begin{align}
{\bar{C}_d (\mathbf{x})} \!=\! & \sum_{s=1}^{3} \phi_s\!\left(\frac{1}{2}\tau_d \!+\! T_{d,s}^{\text{sys}}(\mathbf{x})\right) \!+\! \mu_d\!\left(\frac{E_d}{\tau_d} \!- \!E_d^{\max}\right).
\end{align}

The above optimization is equivalent to a maximum cardinality bin packing problem, which is NP-hard~\cite{tresca2022automating}. Thus, centralized solutions become computationally infeasible for large-scale IoT deployments. To address this computational challenge, we adopt a decentralized game-theoretic framework, leveraging the natural capability of game theory to handle distributed and independent decision-making scenarios.

We formally define a strategic game $\mathcal{G} = \{\mathcal{D},\mathbf{x}, \bar{C}_{d}\}$, where each IoT device $d\in \mathcal{D}$ constitutes a player, the offloading decision represents each player's strategy, and the cost function $\bar{C}_d$ denotes each player's individual cost. In this setting, each device independently solves its optimization problem
\begin{equation}
\min_{x_d\in\{0,1\}}{\bar{C}_d (x_d,x_{-d})},\quad \forall d \in \mathcal{D},
\end{equation}
where $x_{-d}$ represents the strategies of other signal devices.
To effectively solve the decentralized optimization problem outlined above, we seek a stable solution characterized by the Nash equilibrium, defined as follows.
\begin{definition}
A strategy set $\mathbf{x}^* = \{x_1^*,\dots,x_D^*\}$ constitutes a Nash equilibrium for the computation offloading problem if no device can unilaterally reduce its cost by altering its offloading strategy, formally expressed as
\begin{equation}
\bar{C}_d (x_d^*,x_{-d}^*) \le \bar{C}_d (x_d,x_{-d}^*), \quad \forall d\in \mathcal{D}.
\end{equation}
\end{definition}

To efficiently reach the Nash equilibrium, devices iteratively select their best-response strategies based on the current strategies of other signal devices.
\begin{definition}
Given the strategies of other signal devices $x_{-d}^*$, the best response strategy $x_d^*$ for device $d$ is defined as
\begin{equation}
    x_d^* = \arg\min_{x_d \in \{0,1\}} \bar{C}_d (x_d,x_{-d}^*).
\end{equation}
\end{definition}

The best-response condition for offloading decisions is explicitly characterized by the following lemma
\begin{lemma}
Device $d$ prefers offloading status updates to the MEC server if the interference from other signal devices satisfies
\begin{equation}
\sum_{j\neq d}^{D} x_{j}P_{j}g_{j}^{\text{edge}} \le \Lambda_d,
\end{equation}
where
\begin{equation}
\Lambda_d=\frac{P_{d}g_{d}^{\text{edge}}}{2^{\frac{(\sum_{s=1}^{3}\phi_s\tau_d+{\mu_e P_d})S_d}{B\sum_{s=1}^{3}\phi_s\tau_d(W_{d,s} + T_{d,s}^{\text{local,comp}}-T_{d,s}^{\text{edge,comp}})+ \mu_e E_d^{\text{comp}}}}-1}-\omega_0.
\end{equation}
\label{Lm:offload_choice}
\end{lemma}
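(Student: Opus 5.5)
The approach is to compare device $d$'s cost under the two actions $x_d=1$ and $x_d=0$ with $x_{-d}$ held fixed. We then turn the condition $\bar{C}_d(1,x_{-d})\le \bar{C}_d(0,x_{-d})$ into an explicit bound on the aggregate interference. With $\tau_d$ fixed, each $\phi_s$ is a constant. The terms $\frac{1}{2}\tau_d\sum_s\phi_s$, $T_{d,s}^{\text{sens}}$, $E_d^{\text{sens}}$ and $-\mu_d E_d^{\max}$ are the same on both sides, so they cancel.

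First, substitute the system time \eqref{eq:system_time} and the energy model into both costs. Under local processing, the remaining terms are $\sum_s\phi_s\,(W_{d,s}+T_{d,s}^{\text{local,comp}})+\mu_d E_d^{\text{comp}}/\tau_d$. Under offloading, they are $\sum_s\phi_s\,(T_d^{\text{trans}}(\mathbf{x})+T_{d,s}^{\text{edge,comp}})+\mu_d P_d T_d^{\text{trans}}(\mathbf{x})/\tau_d$.

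Multiply through by $\tau_d>0$ and use $T_d^{\text{trans}}=S_d/r_d$. Offloading is preferred iff
\[
\frac{S_d}{r_d(\mathbf{x})}\Big(\sum_{s=1}^3\phi_s\tau_d+\mu_d P_d\Big)\le \sum_{s=1}^3\phi_s\tau_d\big(W_{d,s}+T_{d,s}^{\text{local,comp}}-T_{d,s}^{\text{edge,comp}}\big)+\mu_d E_d^{\text{comp}}.
\]
I will read the paper's $\mu_e$ as $\mu_d$.

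Next, call the right-hand side $A_d$ and the coefficient on the left $B_d:=\sum_s\phi_s\tau_d+\mu_dP_d$. We need $A_d>0$ for offloading ever to be beneficial; if $A_d\le 0$, the device never offloads and the lemma holds vacuously with the threshold read appropriately. When $A_d>0$, the condition is equivalent to $r_d(\mathbf{x})\ge B_dS_d/A_d$. Substitute the rate formula \eqref{eq:rate_definition} and use that $\log_2$ and $2^{(\cdot)}$ are increasing. This gives
\[
\frac{P_dg_d^{\text{edge}}}{\omega_0+\sum_{j\neq d}x_jP_jg_j^{\text{edge}}}\ge 2^{B_dS_d/(BA_d)}-1.
\]
The right-hand side is positive, so we can invert it and subtract $\omega_0$. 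This yields exactly $\sum_{j\ne d}x_jP_jg_j^{\text{edge}}\le\Lambda_d$ with $\Lambda_d$ as stated.

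The main obstacle is bookkeeping rather than depth. The sign conditions must be tracked carefully: dividing by $A_d$ requires $A_d>0$, and inverting the SINR inequality requires the exponential term to exceed 1, which holds since $B_dS_d>0$. Both should be stated explicitly.

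One subtlety is that the capacity constraint \eqref{eq:capacity_constraint_new} is not part of the unilateral cost comparison. The lemma therefore characterizes preference, and feasibility with respect to $S_{\text{th}}$ is treated separately by the best-response algorithm.
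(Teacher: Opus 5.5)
Your proposal is correct and takes the same approach as the paper. Like the paper, you compare $\bar{C}_d(1,x_{-d})$ with $\bar{C}_d(0,x_{-d})$, cancel the terms common to both actions, multiply by $\tau_d$, and invert the Shannon rate to get an interference threshold. One small remark on ``exactly as stated'': your derivation puts $B$ in front of the whole right-hand side $A_d$, including $\mu_d E_d^{\text{comp}}$, which is the correct form, whereas the exponent of $\Lambda_d$ as typeset drops these parentheses (and writes $\mu_e$ for $\mu_d$), so your expression matches the statement only up to that typo.
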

\begin{proof}
See Appendix~A in the supplemental material.
\end{proof}
Thus, the best response decision for each device $d$ is 
\begin{equation}
x_d =
\begin{cases}
1, &\text{if } \sum_{j\neq d}^{D} x_{j}P_{j}g_{j}^{\text{edge}} \le \Lambda_d,\\
0, &\text{otherwise.}
\end{cases}
\end{equation}

Then, we formally establish the existence of a Nash equilibrium in the proposed computation offloading game through the following corollary derived from potential game theory
\begin{corollary}\label{cor:offloading}
The formulated computation offloading game is a potential game, ensuring the existence of at least one Nash equilibrium and satisfying the finite improvement property.
\end{corollary}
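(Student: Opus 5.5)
The plan is to show that $\mathcal{G}$ is a \emph{generalized ordinal} potential game in the sense of Monderer and Shapley. To do so I will build an explicit potential from the threshold structure of the best response in Lemma~\ref{Lm:offload_choice}, rather than from the costs $\bar{C}_d$ themselves. Write $q_d \triangleq P_d g_d^{\text{edge}}$ and $I_d(x_{-d}) \triangleq \sum_{j\neq d} x_j q_j$ for the interference seen by device $d$. I would take
\begin{equation*}
\Phi(\mathbf{x}) = \frac{1}{2}\sum_{d=1}^{D}\sum_{j\neq d} q_d q_j x_d x_j + \sum_{d=1}^{D} q_d \Lambda_d (1-x_d).
\end{equation*}
Here $\Lambda_d$ is the threshold from Lemma~\ref{Lm:offload_choice}, treated as fixed because $\tau$ and $\mu$ are fixed in $(\mathcal{P}_2')$.

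\textbf{Step 1: strict monotonicity of the offloading cost in interference.} For fixed $\tau_d$, the local cost $\bar{C}_d(0,x_{-d})$ does not depend on $x_{-d}$. The offloading cost $\bar{C}_d(1,x_{-d})$ depends on $x_{-d}$ only through $T_d^{\text{trans}}$, which appears both in the delay terms and in the transmission energy. Since $r_d$ is strictly decreasing in $I_d$, the offloading cost is strictly increasing in $I_d$. Hence the inequality in Lemma~\ref{Lm:offload_choice} is an exact characterization:
\[
\bar{C}_d(1,x_{-d}) < \bar{C}_d(0,x_{-d}) \iff I_d(x_{-d}) < \Lambda_d, \qquad \bar{C}_d(1,x_{-d}) > \bar{C}_d(0,x_{-d}) \iff I_d(x_{-d}) > \Lambda_d.
\]
This two-sided equivalence is what the remaining argument relies on.

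\textbf{Step 2: potential differences have the right sign.} Suppose device $d$ alone switches from $x_d=0$ to $x_d=1$. The double sum in $\Phi$ changes by $q_d I_d(x_{-d})$, and the second term changes by $-q_d\Lambda_d$. Therefore
\[
\Phi(1,x_{-d})-\Phi(0,x_{-d}) = q_d\bigl(I_d(x_{-d})-\Lambda_d\bigr).
\]
Because $q_d>0$, this has the same sign as $\bar{C}_d(1,x_{-d})-\bar{C}_d(0,x_{-d})$ by Step~1. The reverse switch from $1$ to $0$ follows by antisymmetry, so $\Phi$ is a generalized ordinal potential.

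\textbf{Step 3: conclusion.} The strategy space $\{0,1\}^D$ is finite. Every strictly improving unilateral move strictly decreases $\Phi$, so no improvement path can cycle, and every improvement path is finite; this is the finite improvement property. A terminal point of any maximal improvement path is a Nash equilibrium, and a minimizer of $\Phi$ is one in particular, so at least one equilibrium exists.

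I expect the main obstacle to be the coupling capacity constraint $\sum_d x_d S_d\le S_{\text{th}}$, which is not a per-player strategy restriction. I would handle it by assigning infinite cost to any profile that violates the capacity. I would then argue that a $0\to1$ switch is only permitted when the resulting profile remains feasible, so such moves still decrease $\Phi$. For the case $\Lambda_d<0$ (device $d$ never prefers to offload) and for the tie $I_d=\Lambda_d$, I would break indifference in favour of the current action so that no zero-gain moves occur.
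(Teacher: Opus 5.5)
Your proof is correct and matches, in essence, the standard interference-threshold potential argument that Lemma~\ref{Lm:offload_choice} is set up for. That argument takes $\Phi(\mathbf{x})=\frac{1}{2}\sum_{d}\sum_{j\neq d}P_dg_d^{\text{edge}}P_jg_j^{\text{edge}}x_dx_j+\sum_{d}P_dg_d^{\text{edge}}\Lambda_d(1-x_d)$ and checks that a unilateral switch changes it by $\pm P_dg_d^{\text{edge}}(I_d-\Lambda_d)$, whose sign agrees with the change in $\bar{C}_d$; this is presumably what the supplemental Appendix~B, which is not included here, contains. Your capacity-constraint patch goes beyond what the corollary needs, since in $\mathcal{G}$ each device minimizes over $x_d\in\{0,1\}$ without that constraint; if you keep the patch, note that a $1\to 0$ move out of an infeasible profile can increase $\Phi$, so either start improvement paths from a feasible profile (e.g., all-local) or add a sufficiently large infeasibility penalty to $\Phi$.
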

\begin{proof}
See Appendix~B in the supplemental material.
\end{proof}
Corollary~\ref{cor:offloading} ensures that our computation offloading optimization problem converges to a Nash equilibrium within finite iterative steps.

In practical implementation, an interference-aware best-response offloading subalgorithm is employed, which is summarized in Algorithm~\ref{alg:offloading}. At iteration $k$, with all other devices strategies fixed and $(\boldsymbol{\tau},\boldsymbol{\mu})$ given, each device $d$ computes its best response $x_d^{\text{BR}}$ to $\mathbf{x}^{(k)}_{-d}$ according to Lemma~1. If $x_d^{\text{BR}} \neq x_d^{(k)}$, let $\mathbf{x}^{(d)}$ denote the profile obtained from $\mathbf{x}^{(k)}$ by replacing $x_d^{(k)}$ with $x_d^{\text{BR}}$. The corresponding reduction in the system cost is
\begin{equation}
\Upsilon _d \triangleq \bar{C}\big(\mathbf{x}^{(k)}\big) - \bar{C}\big(\mathbf{x}^{(d)}\big).
\label{eq:br_reduction}
\end{equation}
Among all indices with $\Upsilon _d>0$, the update with the largest $\Upsilon _d$ is committed, and the remaining decisions are left unchanged. Each committed update is a best response and strictly decreases the potential. The potential is monotonically nonincreasing, and the finite-improvement property of potential games ensures convergence to a Nash equilibrium in a finite number of iterations.

\begin{algorithm}[t]
\caption{Interference-Aware Best-Response Offloading Subalgorithm}
\label{alg:offloading}
\begin{algorithmic}[1]
\REQUIRE System parameters $\tau_d$, $\mu_d$, $\Psi_s$, $\lambda_s$, $T^{\text{sys}}_{d,s}(\mathbf{x})$, $E_d(\mathbf{x},\boldsymbol{\tau})$, $E_d^{\max}$, and $\tau_{\min}$ for each $d \in \mathcal{D}$
\STATE \textbf{Initialize:} Choose initial $\mathbf{x}^{(0)}$ and set $k=0$
\REPEAT
    \STATE Compute $\widehat{C}\big(\mathbf{x}^{(k)}\big)$
    \STATE Set $d^\star \leftarrow \varnothing$ and $\Upsilon ^\star \leftarrow 0$
    \FOR{each device $d \in \mathcal{D}$}
        \STATE Compute $x_d^{\text{BR}}$ from Lemma~\ref{Lm:offload_choice} given $\mathbf{x}^{(k)}_{-d}$
        \IF{$x_d^{\text{BR}} \neq x_d^{(k)}$}
            \STATE Let $\mathbf{x}^{(d)}$ be $\mathbf{x}^{(k)}$ with $x_d \leftarrow x_d^{\text{BR}}$
            \STATE Evaluate $\Upsilon _d$ using \eqref{eq:br_reduction}
            \IF{$\Upsilon _d > \Upsilon ^\star$}
                \STATE $d^\star \leftarrow d$ and $\Upsilon ^\star \leftarrow r_d$
            \ENDIF
        \ENDIF
    \ENDFOR
    \IF{$\Upsilon ^\star > 0$}
        \STATE $\mathbf{x}^{(k+1)} \leftarrow \mathbf{x}^{(d^\star)}$ and $k \leftarrow k+1$
    \ENDIF
\UNTIL{No improving best response exists; $\Upsilon^\star \le 0$}
\RETURN $\mathbf{x}^\star \leftarrow \mathbf{x}^{(k)}$
\end{algorithmic}
\end{algorithm}


\subsection{Lagrange Multiplier Update}
After solving for the sampling intervals and offloading strategies in each iteration, the Lagrange multipliers $\mu_d$ are updated using the subgradient method to enforce the average energy constraint. For each device $d$, the update rule is
\begin{equation}
\mu_d^{(k+1)} = \left[ \mu_d^{(k)} + \eta^{(k)} \left( \bar{E}_d \big(\tau_d^{(k)},\, \mathbf{x}^{(k)}\big) - E_d^{\max} \right) \right]^+,
\label{eq:lambda_update}
\end{equation}
where $k$ is the iteration index, $\eta^{(k)}$ is the step size, and $[\cdot]^+$ denotes the projection onto the non-negative orthant.

This iterative update is performed jointly with the alternating optimization of sampling intervals and offloading strategies, and is repeated until both the objective function and the energy constraints converge within prescribed tolerances.
\begin{algorithm}[t]
\caption{Joint Sampling Offloading Optimization Algorithm}
\label{alg:jaso}
\begin{algorithmic}[1]
\REQUIRE System parameters for all devices; initial sampling intervals $\{\tau_d^{(0)}\}$; initial offloading vector $\mathbf{x}^{(0)}$; initial Lagrange multipliers $\{\mu_d^{(0)}\}$; convergence threshold $\epsilon$
\STATE Set iteration index $k \gets 0$
\REPEAT
    \STATE Optimal sampling-interval subalgorithm (Algorithm~\ref{alg:sampling_interval}):\\
    \hspace{1.5em} Given $\mathbf{x}^{(k)}$ and $\{\mu_d^{(k)}\}$, update $\{\tau_d^{(k+1)}\}$
    \STATE Interference-aware best-response offloading subalgorithm (Algorithm~\ref{alg:offloading}):\\
    \hspace{1.5em} Given $\{\tau_d^{(k+1)}\}$ and $\{\mu_d^{(k)}\}$, update $\mathbf{x}^{(k+1)}$
    \STATE Lagrange multiplier update (Eq.~\eqref{eq:lambda_update}):\\
    \hspace{1.5em} For each $d$, update $\lambda_d^{(k+1)}$
    \STATE Compute overall system cost $\hat{C}^{(k+1)} = \sum_{d=1}^{D} C_d^{(k+1)}$
    \STATE $k \gets k + 1$
\UNTIL{$|\hat{C}^{(k)} - \hat{C}^{(k-1)}| < \epsilon$}
\RETURN Optimal sampling intervals $\{\tau_d^*\}$, offloading strategy $\mathbf{x}^*$, and Lagrange multipliers $\{\mu_d^*\}$
\end{algorithmic}
\end{algorithm}

\subsection{Complexity Analysis}
The details of the proposed JSO algorithm are summarized in Algorithm~\ref{alg:jaso}. The iterative process is guaranteed to converge, as the system cost is lower-bounded and decreases monotonically with each iteration.

The computational complexity of the proposed JSO algorithm is mainly determined by its three iterative modules. At each iteration, the optimal sampling-interval subalgorithm is performed independently for each device, and the per-device solution via Newton's method requires at most $N$ updates. Hence, the total complexity of Algorithm~\ref{alg:sampling_interval} is
\begin{IEEEeqnarray}{c}
    \mathcal{T}_1 = \mathcal{O}(DN).
\end{IEEEeqnarray}
For the interference-aware best-response offloading subalgorithm, distributed best-response dynamics are adopted. Each device sequentially evaluates its offloading decision by calculating aggregate interference from all other signal devices, and $I$ denotes the number of best-response rounds. Therefore, the complexity of Algorithm~\ref{alg:offloading} is
\begin{IEEEeqnarray}{c}
    \mathcal{T}_2 = \mathcal{O}(ID^2).
\end{IEEEeqnarray}
The Lagrange multiplier update involves only simple arithmetic operations and incurs negligible overhead. As a result, the overall per-iteration complexity of Algorithm~\ref{alg:jaso} is
\begin{IEEEeqnarray}{c}
    \mathcal{T}_3^{\text{iter}} = \mathcal{O}(DN + ID^2),
\end{IEEEeqnarray}
Therefore, the total complexity of JSO algorithm over $K$ outer iterations is
\begin{IEEEeqnarray}{c}
    \mathcal{T}_3 = \mathcal{O}\big(K[DN + ID^2]\big).
\end{IEEEeqnarray}

\begin{table}[t]
\centering
\renewcommand{\arraystretch}{1.15}
\caption{Simulation Parameters.}
\label{tab:total_params}
\begin{tabular}{lll}
\toprule
\textbf{Parameter} & \textbf{Symbol} & \textbf{Value/Setting} \\
\midrule
Transmit power & $P_d$ & $0.1~\mathrm{W}$ \\
Channel bandwidth & $B$ & $1~\mathrm{MHz}$ \\
Noise power & $\omega_0$ & $-100~\mathrm{dBm}$ \\
Local CPU frequency & $f_{\text{local}}$ & $1~\mathrm{GHz}$ \\
Edge CPU frequency & $f_{\mathrm{edge}}$ & $10~\mathrm{GHz}$ \\
Computation energy coefficient & $\gamma$ & $10^{-9}~\mathrm{J/FLOP}$ \\
Max device energy & $E_{d}^{\max}$ & $1~\mathrm{J}$ \\
Min. sampling interval & $\tau_{\min}$ & $2~\mathrm{s}$ \\
Image arrival rate  & $\lambda_{d}^{\text{img}}$ & $0.8~\mathrm{s}^{-1}$ \\
Audio arrival rate & $\lambda_{d}^{\text{aud}}$ & $0.8~\mathrm{s}^{-1}$ \\
Signal arrival rate  & $\lambda_{d}^{\text{sig}}$ & $0.8~\mathrm{s}^{-1}$ \\
Lagrange step size & $\eta$ & $0.01$ \\
Convergence threshold & $\epsilon$ & $10^{-3}$ \\
Image data size & $H_d\times W_d$ & $224\times 224~\mathrm{pixels}$ \\
Audio duration & $t_{d}^{\text{aud}}$ & $2~\mathrm{s}$ \\
Audio sampling rate & $R_{d}^{\text{aud}}$ & $16~\mathrm{kHz}$ \\
Audio bit depth & $b_{d}^{\text{aud}}$ & $16~\mathrm{bits}$ \\
Audio channels & $n_{d}^{\text{aud}}$ & $1$ (mono) \\
Signal duration & $t_{d}^{\text{sig}}$ & $3~\mathrm{s}$ \\
Signal sampling rate & $R_{d}^{\text{sig}}$ & $80~\mathrm{Hz}$ \\
Signal bit depth & $b_{d}^{\text{sig}}$ & $16~\mathrm{bits}$ \\
Camera per-capture overhead & $\kappa^{\text{cam}}_{d}$ & $5\,\mathrm{mJ}$ \\
Per-pixel energy & $e^{\text{pix}}_{d}$ & $15\,\mathrm{pJ/pixel}$ \\
Image channels & $c^{\text{img}}$ & $3$ \\
Audio baseline power & $P^{\text{aud}}_{d}$ & $8\,\mathrm{mW}$ \\
ADC/I/O scaling & $\alpha_{d}$ & $1.0\times 10^{-8}$ \\
Radar active power & $P^{\text{sig}}_{d}$ & $50\,\mathrm{mW}$ \\
\bottomrule
\end{tabular}
\label{tab:sim_params}
\end{table}

\section{Experimental Evaluation}\label{sec:experimental}
\subsection{Simulation Setup}
The MEC system consists of a single MEC server co-located with BS covering a \num{40} m $\times$ \num{40} m area, within which $D$ IoT devices are randomly distributed. The uplink channel gain between each device and a BS is modeled as $g_{d}^{\mathrm{edge}} = h_{d}^{-\delta}$, where $h_{d}$ denotes the device-BS distance and the path-loss exponent $\delta$ is set to \num{2}.

{Image data} is processed using the ResNet-18 architecture~\cite{he2016resnet}, following the standard implementation in the {torchvision} library. The network consists of \num{17} convolutional layers and \num{1} fully connected layer, grouped into four residual blocks. The model contains approximately \num{11.7}~million parameters and is pretrained on the ImageNet dataset. When processing input images of size $224 \times 224$ pixels with \num{3}~channels, the average computational complexity is approximately {4 GFLOPs} per forward pass, based on standard FLOPs profiling tools.

{Audio data} is processed using the DeepSpeech2 model~\cite{amodei2016deep}, based on the open-source PyTorch implementation~\cite{deepspeechpytorch}. The model consists of \num{2} convolutional layers (with kernel sizes $[11 \times 41]$ and $[11 \times 21]$, stride $[2 \times 2]$), followed by 5 bidirectional GRU layers (with 800 hidden units each), and a final fully connected layer for classification. Input audio is transformed into spectrograms using a Hamming window of size 20~ms and stride 10~ms, resulting in approximately 100 frames per second of audio. Empirical complexity profiling estimates the model's computational cost to be around 5~GFLOPs per second of input audio.

{Other signal data} is processed using the TFT~\cite{lim2021temporal}, based on the open-source PyTorch implementation~\cite{pytorchforecasting}. The TFT model comprises four encoder layers with a hidden dimension of 160, 8 attention heads, and a feed-forward dimension of 640. The model integrates multi-head attention and position-wise feed-forward mechanisms. When processing an input sequence of 200 frames, the estimated computational complexity is approximately {0.45 GFLOPs} per sequence. 

The remaining simulation parameters are summarized in Table~\ref{tab:sim_params}, in which, we set identical arrival rates for image, audio, and other signal modalities, $\lambda_{d}^{\text{img}}=\lambda_{d}^{\text{aud}}=\lambda_{d}^{\text{sig}}=0.8\,\mathrm{s}^{-1}$, to keep the arrival processes comparable across modalities and to focus the evaluation on MAoI and the proposed JSO algorithm. 

\begin{figure}
    \centering
    \includegraphics{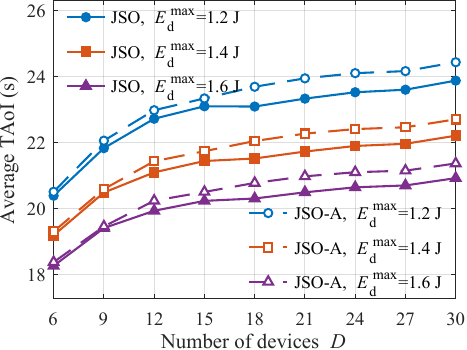}
    \caption{System average MAoI vs. the number of devices for JSO and JSO-A under different energy budgets}
    \label{fig:taoi_vs_D}
\end{figure}

\begin{figure}
    \centering
    \includegraphics{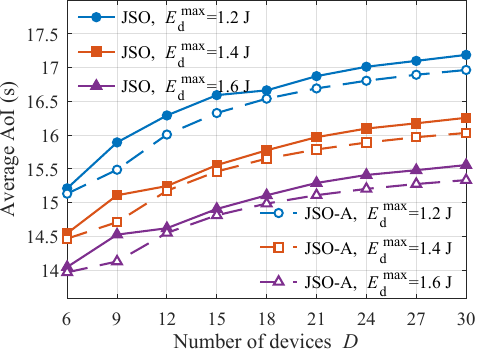}
    \caption{System average AoI vs. the number of devices for JSO and JSO-A under different energy budgets}
    \label{fig:aoi_vs_D}
\end{figure}

\begin{figure}
\centering\includegraphics{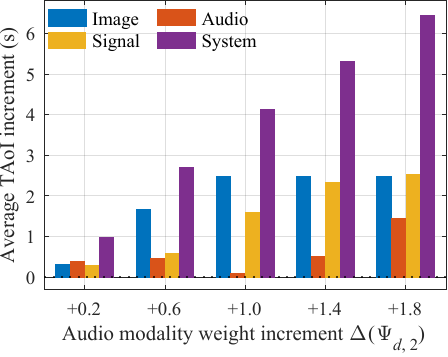}
    \caption{Per-modality and system average MAoI increments vs. the audio modality weight increment ($\Delta\Psi_{d,2}$) under fixed $D$ and $E_d^{\max}$. }
    \label{fig:modality_increment_TAoI}
\end{figure}
\begin{figure}
    \centering
    \includegraphics{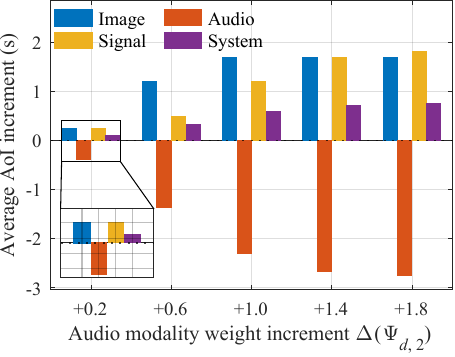}
    \caption{Per-modality and system average AoI increments vs. the audio modality weight increment ($\Delta\Psi_{d,2}$) under fixed $D$ and $E_d^{\max}$.}
    \label{fig:4_modality_increment_AoI}
\end{figure}

\subsection{Effectiveness of the MAoI}
To quantify the benefit of the MAoI, we run a controlled comparison where the optimization pipeline is fixed and only the objective changes. 
We construct a modality-agnostic baseline, {JSO-A}, by replacing the MAoI objective with the average AoI under the same pipeline and constraints. 
When all components are matched and only the objective differs, the observed performance difference isolates the effect of encoding modality weights in MAoI relative to standard AoI.


Figs.~\ref{fig:taoi_vs_D} and \ref{fig:aoi_vs_D} plot the system average MAoI and AoI versus the number of devices $D$, with curves for {JSO} and {JSO-A} under several energy budgets $E_d^{\max}$. 
Across all $D$ and $E_d^{\max}$, {JSO} attains a lower system-average MAoI than {JSO-A}, while {JSO-A} attains a lower system-average AoI under the same settings. The difference stems from the optimization targets. {JSO} optimizes MAoI, which embeds modality weights $\Psi_{d,s}$ into the age metric and therefore prioritizes faster or more important modalities; this can increase the system-average AoI of slower modalities even as AoI decreases. {JSO-A} optimizes the unweighted AoI and treats modalities uniformly, which lowers AoI but does not exploit modality-aware freshness, so its MAoI remains higher.
The results show that, unlike modality-agnostic AoI, the proposed MAoI is a modality-aware metric that captures per-modality dynamics and timeliness, making it well-suited for measuring information freshness in multimodal systems.

\subsection{Impact of the Modality Weight}
We characterize how increasing the audio weight affects per-modality and system MAoI and AoI with all other signal weights and system settings fixed. This experiment highlights MAoI's role in capturing modality importance and freshness by examining how per-modality and system MAoI respond to an additive increase in the audio weight $\Delta\Psi_{d,2}$ under fixed settings.
In our formulation, modality weights are independent coefficients and are not normalized to sum to one; therefore, we adjust the audio weight additively by $\Delta\Psi_{d,2}$ without re-normalization to isolate the marginal effect of the audio weight. Any changes for image and signal arise from queueing interactions rather than from explicit reductions of their weights.

Figs.~\ref{fig:modality_increment_TAoI} and~\ref{fig:4_modality_increment_AoI} report the per-modality and system-level increments in MAoI and AoI as the audio modality weight is increased from a randomly initialized baseline. Here, the baseline denotes the initial randomly initialized weight setting used to start the experiment, and the increments are measured relative to this setting.

Under fixed resources, a larger audio weight raises the service priority of audio updates, shortens their local waiting time, and thereby reduces their AoI. The audio MAoI can vary non-monotonically because two effects act in opposite directions: reduced delay lowers the audio age, while the larger coefficient $\Psi_{d,2}$ scales up the weighted term in MAoI. Consistent with this mechanism, as the audio weight increases, the audio average AoI decreases, whereas the audio average MAoI exhibits the non-monotonic behavior described above. 

For the image and other signal modalities, both AoI and MAoI increase because their queueing times increase. 
At the system level, average MAoI shows an upward trend since the rise in non-audio MAoI dominates. 
The system average AoI also increases because the aggregate increase of non-audio ages outweighs the decrease achieved for audio modality. 

Overall, these results show that the modality weights provide effective, targeted control of freshness across modalities: a larger modality weight raises the priority of the corresponding modality and improves its timeliness, while incurring a modest increase in system-level AoI.

\subsection{Evaluation of the JSO}
To evaluate the effectiveness of the proposed JSO algorithm, we compare it with several representative baseline schemes under identical simulation settings. For each baseline except the Fixed Minimum Interval Algorithm (FMI), the sampling intervals and Lagrange multipliers are iteratively optimized as in the proposed JSO algorithm, with the offloading decisions determined by the corresponding scheme. The baseline schemes are summarized as follows:

\begin{itemize}

\item {\textbf{Fixed Minimum Interval (FMI):}} For each device, the sampling interval is fixed to the minimum energy feasible value. The offloading decisions and Lagrange multipliers are iteratively updated as in the JSO algorithm.

\item {\textbf{Full Local Computing (FLC):}} All devices always perform computations locally without offloading. 


\item {\textbf{Greedy Marginal-Cost Offloading (GMO):}} All devices start with local execution. In each iteration, GMO temporarily enables offloading for each undecided device and evaluates the system cost; it then fixes the device that yields the largest marginal decrease. Once a device is set to offload, the decision is not reversed.

\item {\textbf{Independent Distributed Decision (IDD):}} Each device decides on offloading using only local conditions, treating other signal devices with fixed average interference. 

\item {\textbf{Device-Wise Best-Response Offloading (DBRO):}} Each device, given the prior offloading decisions of other signals, compares local and edge costs and chooses its best response. Unlike JSO, which updates the offloading decision of the single device that yields the largest system-level cost decrease, DBRO performs sequential updates based on individual cost reductions.

\end{itemize}

\begin{figure}
\centering
\includegraphics{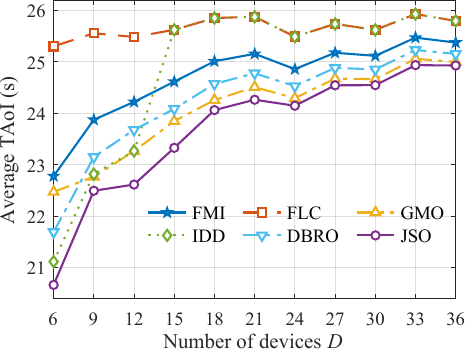}
\caption{Average MAoI vs. the number of devices $D$ for JSO and baselines.}
\label{fig:diff_device_algo}
\end{figure}
\begin{figure}
\centering
\includegraphics{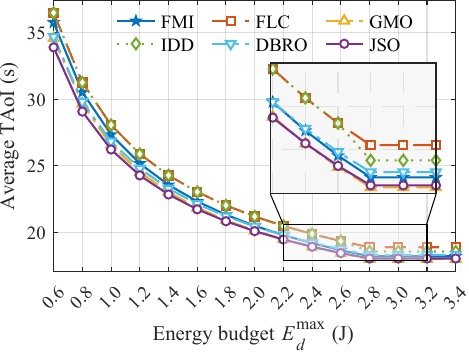}
\caption{Average MAoI vs. the energy budget $E_d^{\max}$ for {JSO} and baselines.}
\label{fig:diff_energy}
\end{figure}
\begin{figure}
\centering
\includegraphics{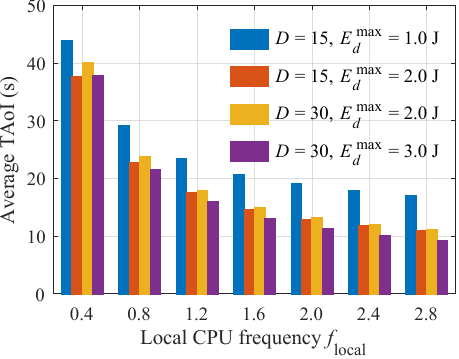}
\caption{System average MAoI vs. the local CPU frequency $f_{\text{local}}$.}
\label{fig:local_computing}
\end{figure}

Fig.~\ref{fig:diff_device_algo} shows the system average MAoI versus the number of devices $D$ for {JSO} and representative baselines. 
{JSO} consistently attains the lowest MAoI across all numbers of devices $D$, confirming the advantage of jointly adapting sampling and offloading. 
As $D$ increases, the average MAoI generally rises because mutual interference on the uplink limits how many devices can offload; consequently, only a limited subset of devices can offload, and a growing fraction must execute locally, which lengthens information cycles.
An exception is FLC, the system average MAoI is nearly flat with $D$ since all devices always compute locally and thus avoid the shared uplink.

Fig.~\ref{fig:diff_energy} plots the system average MAoI versus the per-device energy budget $E_d^{\max}$, comparing {JSO} with baselines. Across the evaluated range of $E_d^{\max}$, {JSO} achieves the lowest average MAoI. For all algorithms, average MAoI decreases as $E_d^{\max}$ increases because a larger energy budget allows shorter sampling intervals.
Once $E_d^{\max}$ exceeds 2.8~J, the system average MAoI shows little further reduction. In this range, devices already operate at the minimum sampling interval, so energy no longer constrains the sensing rate, and further increases in $E_d^{\max}$ do not lower MAoI.
At high energy budgets, {GMO} performs similarly to {JSO} because the sampling interval is fixed at its minimum, which leaves only offloading decisions and thus reduces {JSO} to the GMO.

Fig.~\ref{fig:local_computing} plots the system average MAoI versus the local CPU frequency $f_{\text{local}}$ across multiple settings of the number of devices $D$ and the maximum energy budget $E_d^{\max}$. In all cases, increasing $f_{\text{local}}$ reduces the system average MAoI because higher $f_{\text{local}}$ shortens local processing and queueing for all modalities.
The improvement diminishes at high $f_{\text{local}}$. As $f_{\text{local}}$ increases, local compute delay still decreases but at a slower rate. 
In this regime, MAoI is shaped mainly by the chosen sampling interval and offloading decisions, and additional increases in $f_{\text{local}}$ provide limited gains.

\begin{figure}
\centering
\includegraphics{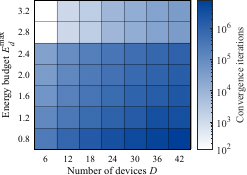}
\caption{Convergence iterations over a grid of number of  devices $D$ and energy budgets $E_d^{\max}$.}
\label{fig:hotmap_converge}
\end{figure}

Fig.~\ref{fig:hotmap_converge} reports the number of iterations to convergence over a grid of the number of devices $D$ and energy budgets $E_d^{\max}$ under the stopping rule described in the methodology. The iteration count increases with $D$ and decreases as $E_d^{\max}$ grows, and it remains within a practical range across the tested grid.
Fig.~\ref{fig:line_converge} further shows the scaling with $D$ at several energy budgets, with curves ordered by $E_d^{\max}$ such that larger budgets require fewer iterations for all $D$. When $E_d^{\max}$ is 3~J, the curves are nearly flat with respect to $D$. At this energy level, the sampling interval is fixed at its minimum, and repeated interval updates are unnecessary. Convergence is then driven mainly by offloading updates, which stabilize quickly.
\begin{figure}
\centering
\includegraphics{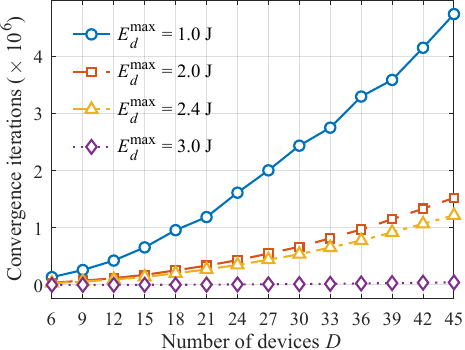}
\caption{Convergence iterations vs. the number of devices $D$ under several energy budgets $E_d^{\max}$.}
\label{fig:line_converge}
\end{figure}

\section{Conclusion}\label{sec:conclusion}
In this study, MAoI was introduced to quantify modality‑aware freshness for images, audio, and signals by linking age growth to content dynamics, semantic variation, and measurement quality. 
A closed-form expression of the average MAoI in a MEC system was derived, and an average-MAoI minimization problem was formulated. To solve the problem, a BCD-based algorithm named JSO is proposed, which alternates updates of the sampling intervals and the offloading decisions. 
Extensive simulations demonstrated the effectiveness of MAoI as a freshness metric. Compared with traditional AoI, MAoI more faithfully captured multimodal freshness by incorporating modality-specific semantic dynamics and temporal characteristics. Moreover, the proposed optimization framework consistently reduced system-level average MAoI relative to state-of-the-art baselines while satisfying per-device energy budgets. Additional studies confirmed stable convergence of the algorithm.
In summary, this work offered a unified framework for modality-aware AoI design, with potential implications for efficient multi-modal edge intelligence.

\end{document}

%% file: acronyms.tex
\newacronym{mec}{MEC}{millimeter-Wave}
\newacronym{taoi}{TAoI}{Modality-Tailored Age of Information}
\newacronym{aoi}{AoI}{Age of Information}
\newacronym{iot}{IoT}{Internet of Things}